\documentclass[lettersize,journal]{IEEEtran}
\usepackage{amsmath,amsfonts,amssymb}
\usepackage{algorithmic}
\usepackage{algorithm}
\usepackage{array}
\usepackage{textcomp}
\usepackage{stfloats}
\usepackage{url}
\usepackage{verbatim}
\usepackage{graphicx}
\usepackage{cite}
\usepackage{comment}
\usepackage{physics}
\usepackage{epstopdf}
\usepackage{wrapfig}
\usepackage{xcolor}
\usepackage{subcaption}
\usepackage{romannum}
\usepackage{titlesec}
\usepackage{hyperref}
\titlespacing*{\section}
{0pt}    
{1ex}    
{0.5ex}  

\titlespacing*{\subsection}
{0pt}
{0.8ex}
{0.4ex}
\newtheorem{thm}{\textbf{Theorem}}
\newtheorem{lemma}{\textbf{Lemma}} 
\newtheorem{remark}{\textbf{Remark}}
\newtheorem{assumption}{\textbf{Assumption}}
\newtheorem{definition}{\textbf{Definition}}
\newtheorem{Proposition}{\textbf{Proposition}}
\newtheorem{cor}{\textbf{Corollary}}
\newtheorem{note}{\textbf{Note}}
\newenvironment{proof}{{\emph{\textbf{Proof:}} }}{\hfill $\square$}
\newtheorem{problemx}{\textbf{Problem}}
\newenvironment{problem}{%
  \problemx
}{\endproblemx}
\begin{document}

\title{Bearing Only Distributed Circumnavigation with Limited Target Information for Asymmetric Dubins Vehicles}

\author{Kushal P. Singh$^{1}$ Student Member IEEE, and Twinkle Tripathy$^{2}$ Member IEEE 
\thanks{*This work was not supported by any organisation.}
\thanks{$^{1}$Kushal P. Singh is a research scholar in the department of Electrical Engineering, Indian Institute of Technology Kanpur,
        Kanpur-208016, India
        {\tt\small kushalp20@iitk.ac.in}}%

\thanks{$^{2}$Twinkle Tripathy is an assistant professor with the Department of Electrical Engineering, Indian Institute of Technology Kanpur,
        Kanpur-208016, India
        {\tt\small ttripathy@iitk.ac.in}}%
}

\markboth{Journal of \LaTeX\ Class Files,~Vol.~14, No.~8, August~2021}%
{Shell \MakeLowercase{\textit{et al.}}: A Sample Article Using IEEEtran.cls for IEEE Journals}

\IEEEpubid{0000--0000/00\$00.00~\copyright~2021 IEEE}

\maketitle

\begin{abstract}
In this paper, we present a class of bearing-based distributive nonlinear guidance laws for the cooperative circumnavigation of a stationary target by a heterogeneous team of asymmetric Dubins' vehicles. In such a vehicle, the maximal left and right turn capabilities are non-uniform. In the given framework, the target's location is known only to a small subset of the vehicles, called the \textit{leaders}. The uninformed vehicles, called the \textit{followers}, use information from their out-neighbours in the communication graph, constructed using the nearest-neighbour rule. A class of guidance laws is formulated that relies solely on the heading angle and line-of-sight (LOS) angles of a vehicle's designated out-neighbour in the graph. Using Zubov’s theorem, we prove that the proposed guidance laws achieve global asymptotic stability (GAS) under angular-speed-only control and ensure the convergence of the trajectories of all the Dubins vehicles to a common centre. The proposed results are validated through numerical simulations. 
\end{abstract}

\begin{IEEEkeywords}
Asymmetric Dubins vehicles, distributed control, nonlinear guidance and circumnavigation. 
\end{IEEEkeywords}

\section{Introduction}
\label{Section:Introduction}
Target circumnavigation has become important in many autonomous systems applications~\cite{B.Jha,li2025bearing} including surveillance, reconnaissance, search and rescue, agricultural robotics, orbit maintenance, multi-robot firefighting, and oil and gas pipeline inspection. Its practical significance has motivated extensive research on the design of effective guidance laws.

A substantial body of this research is dedicated to developing guidance laws for a single pursuer circumnavigating a stationary target~\cite{2022_single,hashemi2015unmanned,milutinovic2014coordinate,milutinovic2017coordinate}. More specifically, a substantial amount of it is based on range information-based guidance law design, whereas comparatively less attention has been given to guidance laws that rely on bearing information~\cite{deghat2014localization}. The control law presented in \cite{2022_single} allows an agent to achieve a desired standoff distance from a target by using only the rate of change of range information, thereby, eliminating the need for absolute position data. For a Dubins vehicle, an approach to achieve circumnavigation under a noisy environment using both range rate measurements and bearing angle is presented in \cite{hashemi2015unmanned}. Thereafter, using information similar to that in \cite{2022_single}, the authors in \cite{milutinovic2014coordinate,milutinovic2017coordinate} presented a control law for a Dubins vehicle that utilises feedback from range and range rate measurements. On the other hand, the authors in~\cite{deghat2014localization} achieve the target's circumnavigation solely from bearing measurements.

Although single-agent systems offer considerable utility, the inherent complexity of large-scale networks, as observed in biological systems, demands a transition to cooperative control methodologies. The benefits of cooperative strategies, including scalability, robustness against individual failures, and mitigation of communication bottlenecks, have positioned them as a central focus of intensive research. Seminal works in this area, including \cite{all_to_all_communication,limited_communication,marshall2004formations}, have formulated control laws for groups of unicycle pursuers operating at identical constant linear speeds. A Lyapunov-based control strategy for achieving cyclic formations with all-to-all communication was introduced in \cite{all_to_all_communication} and was later extended in \cite{limited_communication} to support limited communication topologies. Inspired by the bug algorithm, a cyclic pursuit-based control law was proposed in \cite{marshall2004formations} to encircle a target in regular polygon formations, a concept that was subsequently generalised in \cite{sinha2007generalization} to handle pursuers with varying linear speeds, thereby, achieving rotating formations. For constant linear speeds, the authors in \cite{TRIPATHY2024111315} propose to achieve convergence from almost all initial conditions.

For agents modelled as single and double integrators, distributed control laws are developed in \cite{ramirez2010distributed} to achieve circular, elliptic, and spiral formations using information from only two pursuers. In \cite{el2012distributed}, a distributed control law is designed in which the centre's location is dynamic and contingent on the system's initial conditions. The authors of \cite{seyboth2014collective} introduced a guidance law for a heterogeneous group of unicycles, wherein two distinct types of circular motion are achievable: one defined by common angular speeds and another by common radii. Further investigating control heterogeneity, the authors of \cite{zheng2015distributed} modulate both linear and angular speeds to attain circular formations at different radii by assigning attractive or repellent behaviours to agents relative to the target and their neighbours.

A significant challenge in cooperative circumnavigation is when not all agents know the target's location. The works presented in \cite{yu2018circular} and \cite{yu2018distributed} tackle this issue by regulating both the linear and the angular speeds to achieve cooperative formation when only a single unicycle is aware of the target's location. Yet another challenge pertains to the kinematics of the agents.
In all the works presented for Dubin's vehicles in \cite{2022_single,hashemi2015unmanned,milutinovic2014coordinate,milutinovic2017coordinate,dong2019circumnavigation}, symmetric bounds are assumed on angular speed in both clockwise and anticlockwise directions, which is not always realistic. Asymmetric control limits can arise in several practical scenarios, such as an aircraft with a damaged aileron or missing wingtip \cite{bakolas2009asymmetric}, systems influenced by gyroscopic effects from a rotating engine \cite{lundgren2020asymmetric}, and marine vehicles subject to external disturbances like water currents \cite{parlangeli2024novel}. Hence, there is a need for asymmetric Dubins vehicles.

Unlike most of the above-mentioned works that consider the same radii for all the pursuers, configurations with pursuers at varying radii provide substantial strategic benefits, such as enabling inner pursuers to conduct surveillance on a target while outer pursuers form a protective cordon against potential threats. To achieve such multi-layered formations, in this paper, we introduce a method that operates cooperatively within a multi-agent framework, subject to constraints on the communication graph and control inputs. Unlike many existing studies that rely on distance measurements or dual linear-angular speed control for stability, this paper addresses circumnavigation without distance measurements, guaranteeing stability by controlling only the angular speeds of asymmetric Dubins vehicles, thereby facilitating real-world implementation. The main contributions of this paper are:
\begin{enumerate}
    \item \textit{Novel class of bearing-based distributive guidance laws for asymmetric Dubins vehicles:} We introduce a class of bearing-based distributive guidance laws that utilises only angular speed control to circumnavigate a stationary target by a heterogeneous group of $n$ asymmetric Dubins vehicles. To the best of our knowledge, this framework is the first to cooperatively solve the circumnavigation problem while accounting for both different constant linear speeds and unequal turning rate limits, a more realistic model for physical systems.
    \item \textit{Global asymptotic stability (GAS):} The proposed guidance laws ensure circumnavigation by the entire group of Dubins vehicles from any arbitrary initial conditions. GAS is established using \textit{Zubov's theorem} \cite{khalil2002nonlinear}.
    \item \textit{Robust with limited information and input saturation:} The guidance law for followers operates on minimal, local information (a single neighbour's heading angle and LOS angle) without requiring target data. We prove that convergence is guaranteed even under asymmetric input saturation, demonstrating the strategy's robustness to real-world constraints.
\end{enumerate}

The structure of this paper is outlined as follows. Section \ref{Section:preliminaries} introduces the essential background and preliminary concepts. In Section \ref{Section:Problem_Formulation}, the main problem is formally defined. The proposed guidance laws are detailed in Section \ref{section:Guidance law}, first for a group of unicycles (subsection \ref{subsec:for_two}) and then for a group of asymmetric Dubins vehicles (subsection \ref{subsec:dubin}). Simulation results supporting the theoretical developments are provided in Section \ref{Sec:Simulaton}. Lastly, Section \ref{Sec:Conclusion} offers concluding remarks and outlines possible directions for future work.
\section{Preliminaries}
\label{Section:preliminaries}
\textit{Notations:} Let $\mathbb{R}$ and $\mathbb{N}$ denote the set of real numbers and natural numbers, respectively. $\mathbb{R}^+$ denotes the set of positive real numbers and $\mathbb{R}^n$ denotes a real valued vector of size $n \in \mathbb{N}$. $||\bullet||$ denotes the two-norm of a vector. $\iota$ denotes the imaginary number.

\textit{Graph theory:} A directed graph (digraph) $\mathcal{G}=(\mathcal{V},\mathcal{E})$, where $\mathcal{V}=\{1,2,\dots,n\}$ is the set of nodes (pursuers) and $\mathcal{E}\subseteq \mathcal{V}\times \mathcal{V}$ is the set of edges representing communication links.
A directed edge $(i,j)\in\mathcal{E}$ from $i$ to $j$ implies that pursuer $i$ accesses information from pursuer $j$. Then, $j$ becomes an out-neighbour of $i$, and $i$ becomes an in-neighbour of $j$. The set of all the out-neighbours of $i$ are denoted by $\mathcal{N}^{out}_i$.In this framework, the edge $(i,j)\in\mathcal{E}$ implies that pursuer $i$ can sense the heading angle of pursuer $j$ and LOS angle from $i$ to $j$.
A node with no outgoing edges is called a \textit{sink node}.
A digraph $\mathcal{G}$ is called strongly connected if there exists a directed path from any node to every other node.
A digraph $\mathcal{G}$ is called weakly connected if the undirected version of $\mathcal{G}$ is connected.

\textit{Nonlinear systems:}  A nonlinear system is represented as $\dot{x}=f(t,x,u)$, where $f:D \to \mathbb{R}^n$ and $D$ is the domain. We refer to $x$ as the state, $t$ as time and $u$ as the input. It is referred to as an autonomous system if $\dot{x}=f(x)$. The equilibrium point ($x=0$) of $\dot{x}=f(x)$ is:
\begin{itemize}
    \item Stable if for each $\epsilon>0$, there exist $\delta=\delta(\epsilon)>0$ such that $||x(0)||<\delta  \implies  ||x(t)||< \epsilon$, $\forall$ $t \geqslant0$.
    \item Asymptotically stable if it is stable and $\delta$ an be chosen such that $||x(0)||<\delta \implies \lim_{t \to \infty}x(t)=0$.
    \item Globally asymptotically stable if it is asymptotically stable and domain $D=\mathbb{R}^n$.
\end{itemize}
\begin{lemma}{(Zubov's Theorem)}
    \label{thm:zubov}
    Consider the non-linear system $\dot{x}=f(x)$ with an equilibrium at origin and let $G \subset \mathbb{R}^n$ be a domain containing the origin. Suppose there exist two functions $V:G\rightarrow \mathbb{R}$ and $h:\mathbb{R}^n \rightarrow \mathbb{R}$ with the following properties: 
    \begin{itemize}
        \item $V$ is continuously differentiable and positive definite in $G$ and satisfies
        \setlength{\abovedisplayskip}{2pt}
        \setlength{\belowdisplayskip}{2pt}
        \begin{equation}
            \label{eq:cond_V}
            0<V(x)<1, \quad \forall x\in G \setminus\{0\}.
        \end{equation}
        \item As $x$ approaches the boundary of $G$, or in case of unbounded $G$ as $||x||\rightarrow\infty$, $\lim V(x)=1$.
        \item $h$ is continuous and positive definite on $\mathbb{R}^n$.
        \item For $x\in G$, $V(x)$ satisfies the partial differential eqn.
        \begin{equation}
        \label{eq:zubov_condition}
            \frac{\partial V}{\partial x}f(x)=-h(x)[1-V(x)].
        \end{equation}
    \end{itemize}
    \begin{figure}[ht]
\begin{center}
\includegraphics[scale=0.85]{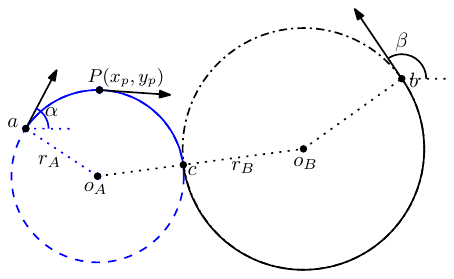}    
\caption{Location of centres $\mathbf{o_A}$ and $\mathbf{o_B}$ of circles $\mathcal{C}_A$ and $\mathcal{C}_B$, respectively}  
\label{fig:centers and radii}                                 
\end{center}                                 
\end{figure}
    Then, $x=0$ is asymptotically stable, and $G$ is the region of attraction.
\end{lemma}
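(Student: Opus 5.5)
We prove the two claims separately: first that the origin is asymptotically stable, then that $G$ is exactly its region of attraction. The tool throughout is the change of variables $W(x) = -\ln(1-V(x))$, defined on $G$ because $0\le V<1$ there.

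\emph{Local stability.} $W$ is $C^1$, positive definite on $G$ (since $V$ is), and by \eqref{eq:zubov_condition}
\begin{equation*}
\dot W = \frac{\dot V}{1-V} = -h(x) \le 0 ,
\end{equation*}
with equality only at $x=0$ because $h$ is positive definite. Hence $W$ is a strict Lyapunov function near the origin, and Lyapunov's direct theorem gives asymptotic stability of $x=0$. Working with $V$ directly also suffices, since $\dot V = -h(1-V) < 0$ on $G\setminus\{0\}$ by \eqref{eq:cond_V}.

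\emph{Every trajectory starting in $G$ stays in $G$ and converges to $0$.} Take $x_0\in G$. Along the solution, $V$ is nonincreasing, so $V(x(t))\le V(x_0)=:c<1$.
\begin{itemize}
\item Since $V\to1$ at the boundary of $G$ and as $\|x\|\to\infty$, the sublevel set $\{x\in G: V(x)\le c\}$ is compact and lies strictly inside $G$.
\item The solution therefore exists for all $t\ge0$ and remains in this compact set.
\item Integrating gives $W(x(t)) = W(x_0) - \int_0^t h(x(s))\,ds \ge 0$, so $\int_0^\infty h(x(s))\,ds < \infty$.
\item Because $x(t)$ is bounded, $f$ is bounded on the compact set, so $x(t)$ is uniformly continuous. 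Then $h(x(t))$ is uniformly continuous, and Barbalat's lemma yields $h(x(t))\to0$.
\item Positive definiteness of $h$, together with continuity on the compact set, forces $x(t)\to0$.
\end{itemize}
Hence $G$ is contained in the region of attraction.

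\emph{No point on the boundary of $G$ is attracted to the origin.} This is the main obstacle. Suppose $y\in\partial G$ (and $y\notin G$) but the trajectory from $y$ converges to $0$. Since $G$ is an open neighbourhood of the origin, that trajectory enters $G$ at some time $T$. By continuous dependence on initial conditions, trajectories from points $y_k\in G$ with $y_k\to y$ also lie in $G$ at time $T$, close to $x(T)$. For these, $V(y_k)\to1$ by the boundary hypothesis. On the other hand, integrating $\dot W=-h$ gives
\begin{equation*}
W(y_k) = W(x_k(T)) + \int_0^T h(x_k(s))\,ds .
\end{equation*}
The right-hand side stays bounded, because $x_k(T)$ stays in a compact subset of $G$ and $h$ is bounded on bounded sets; this step may need care about where $x_k(s)$ lies for $s\in[0,T]$. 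So $W(y_k)$ is bounded, contradicting $W(y_k)\to\infty$. Therefore the region of attraction cannot extend past $\partial G$.

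Since the region of attraction is connected and contains $G$, and no point of $\partial G$ belongs to it, it coincides with $G$. In this paper only the inclusion $G\subseteq$ region of attraction is really needed, for GAS with $G=\mathbb{R}^n$.
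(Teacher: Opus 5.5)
The paper does not prove this lemma. It quotes it from \cite{khalil2002nonlinear} without proof and applies it only with $G=\mathbb{R}^2$, so your argument has to stand on its own. It does, and it is correct.

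The change of variables $W$ with $\dot W=-h$ gives asymptotic stability. The two limit hypotheses on $V$ are exactly what make the sublevel sets $\{x\in G: V(x)\le c\}$, $c<1$, compact subsets of $G$; this gives forward invariance of $G$ and global existence. Barbalat's lemma (or LaSalle on the same compact set) then gives $x(t)\to 0$.

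The step you flagged in the boundary argument closes with facts you already have:
\begin{itemize}
\item Because $y_k\in G$, your forward-invariance result gives $x_k(s)\in G$ for all $s\in[0,T]$, so the integrated identity for $W$ is legitimate.
\item Continuous dependence gives $x_k\to x(\cdot\,;y)$ uniformly on $[0,T]$. So for large $k$ all $x_k(s)$ lie in a fixed bounded neighbourhood of the compact arc $\{x(s;y): 0\le s\le T\}$, where $h$ is bounded.
\end{itemize}
This is precisely where continuity of $h$ on all of $\mathbb{R}^n$, not just on $G$, is used. It is also where you need the standing assumption, implicit in the statement, that $f$ is locally Lipschitz (for uniqueness and continuous dependence); you should state it.

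Your final reduction to a boundary point uses connectedness of the region of attraction. That is standard (the region of attraction is open, connected and invariant) but should be cited.

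A slightly more direct alternative needs neither that fact nor continuous dependence, because it works along a single trajectory. Suppose $z\notin G$ is attracted to $0$. Pick $t$ with $x(t)\in G$ and set $\tau=\sup\{s<t: x(s)\notin G\}$. Then $x(\tau)\in\partial G$ and $x(s)\in G$ on $(\tau,t]$. Integrating $\frac{d}{ds}(1-V)=h\,(1-V)$ gives $1-V(x(t))=\bigl(1-V(x(s))\bigr)e^{\int_s^t h(x(\sigma))\,d\sigma}$. Let $s\downarrow\tau$. The first factor tends to $0$ by the boundary hypothesis, while the exponential stays bounded because $h$ is continuous on the compact arc $x([\tau,t])$. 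Hence $V(x(t))=1$, contradicting $V<1$ on $G$.
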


\textit{Path traversal for an autonomous vehicle:} Consider the traversal problem of an autonomous vehicle starting at pre-specified oriented points $A(\mathbf{a},\alpha)$ and $B(\mathbf{b},\beta)$, respectively. Suppose the traversal is along the two circles $\mathcal{C}_A$ $\&$  $\mathcal{C}_B$ of radii $r_A$ $\&$ $r_B$, which are tangent to each other as depicted in Fig. \ref{fig:centers and radii}. The coordinates of the centres $\mathbf{o_A}$ and $\mathbf{o_B}$ are given by $\mathbf{o_A}=(a_x+ r_{A}\sin{\alpha}, a_y - r_{A}\cos{\alpha})$ and $\mathbf{o_B}=(-r_{B}\sin{\beta}, r_{B}\cos{\beta})$, respectively. The specification of such a circle-circle (CC) trajectory is given below:
\begin{lemma}[\cite{rao2024curvature}]
\label{lem:feas_traj}
For any values of oriented points $A(\mathbf{a},\alpha)$ and $B(\mathbf{b},\beta)$, a unique CC trajectory exists $\forall r_A\in\mathbb{R}$ such that $r_B$ is given by $r_B=p_3/(r_A-p_1)+p_2$ where:
\begin{align*}  
    p_1&=(a_x\sin\beta-a_y\cos\beta)/(1-\cos(\alpha-\beta)),\\
    p_2&=(a_x\sin\alpha-a_y\cos \alpha)/(1-\cos(\alpha-\beta)),\\
    p_3&=\left(\dfrac{a_x\sin((\alpha+\beta)/2)-a_y\cos((\alpha+\beta)/2))}{1-\cos(\alpha-\beta)}\right)^2.
\end{align*}
\end{lemma}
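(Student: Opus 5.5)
The plan is to turn the tangency of $\mathcal{C}_A$ and $\mathcal{C}_B$ into one algebraic equation relating $r_A$ and $r_B$, and to observe that this equation is \emph{linear} in $r_B$. Uniqueness and the explicit formula then follow together. Throughout I use the convention implied by the centre formulas: $B$ is placed at the origin, so $\mathbf{b}=0$, and the radii are signed, with the sign encoding the turning direction.

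\textbf{Step 1 (tangency condition).} Define the unit vectors $u=(\sin\alpha,-\cos\alpha)$ and $v=(-\sin\beta,\cos\beta)$. With these, $\mathbf{o_A}=\mathbf{a}+r_A u$ and $\mathbf{o_B}=r_B v$. A CC path from $A$ to $B$ switches from one circle to the other at a common point with a common tangent, so the circles must be tangent. For the signed radii of a CC path with opposite turning senses at the junction, I will use
\begin{equation*}
\|\mathbf{a}+r_A u-r_B v\|^2=(r_A+r_B)^2 .
\end{equation*}
I will first justify this sign convention by checking the internal and external tangency cases against Fig.~\ref{fig:centers and radii}.

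\textbf{Step 2 (expand and solve).} Expanding both sides, the terms $r_A^2$ and $r_B^2$ cancel. Using $u\cdot v=-\cos(\alpha-\beta)$ and writing $c=\cos(\alpha-\beta)$, the condition becomes
\begin{equation*}
r_B\bigl[2r_A(1-c)+2\,\mathbf{a}\cdot v\bigr]=\|\mathbf{a}\|^2+2r_A\,\mathbf{a}\cdot u .
\end{equation*}
The two inner products can be read off directly from the definitions of $p_1$ and $p_2$:
\begin{equation*}
\mathbf{a}\cdot v=-(1-c)p_1,\qquad \mathbf{a}\cdot u=(1-c)p_2 .
\end{equation*}
The coefficient of $r_B$ is then $2(1-c)(r_A-p_1)$. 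Provided $c\neq1$ and $r_A\neq p_1$, this gives the unique solution
\begin{equation*}
r_B=p_2+\frac{\|\mathbf{a}\|^2+2(1-c)p_1p_2}{2(1-c)(r_A-p_1)} .
\end{equation*}
Uniqueness holds because the equation is linear in $r_B$. The excluded value $r_A=p_1$ will be interpreted as the degenerate limit $r_B\to\infty$, that is, a straight segment. The case $c=1$, parallel headings, is excluded because $p_1,p_2,p_3$ are undefined there.

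\textbf{Step 3 (identify $p_3$).} It remains to show that
\begin{equation*}
\frac{\|\mathbf{a}\|^2}{2(1-c)}+p_1p_2=p_3 .
\end{equation*}
I expect this trigonometric identity to be the main, though routine, obstacle. My plan is to write $\mathbf{a}$ in polar form, $\mathbf{a}=\rho(\cos\theta,\sin\theta)$, so that
\begin{equation*}
a_x\sin\gamma-a_y\cos\gamma=\rho\sin(\gamma-\theta)
\end{equation*}
for any angle $\gamma$. The identity then reduces to
\begin{equation*}
\sin(\beta-\theta)\sin(\alpha-\theta)+2\sin^2\!\Bigl(\tfrac{\alpha-\beta}{2}\Bigr)=2\sin^2\!\Bigl(\tfrac{\alpha+\beta}{2}-\theta\Bigr)\cdot\tfrac{1-c}{1-c}\,.
\end{equation*}
To verify it, I will apply the product-to-sum formula to the left-hand side and use $1-c=2\sin^2((\alpha-\beta)/2)$ on both sides, then compare.

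If I find a sign mismatch at Step 3, I will go back and adjust the tangency sign convention in Step 1, either $(r_A+r_B)^2$ or $(r_A-r_B)^2$, until the result matches the stated formula.
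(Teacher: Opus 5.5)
\textbf{Overall.} The paper does not prove this lemma; it imports it from the cited work, so there is no in-paper proof to compare against. Your route (impose tangency of the two signed circles, then note that the resulting equation is linear in $r_B$) is the natural derivation, and Steps 1 and 2 are correct.

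\textbf{Steps 1 and 2.} Take $u$ as the right normal of $\alpha$ and $v$ as the left normal of $\beta$. Opposite turning senses give external tangency and equal senses give internal tangency. In all four sign cases the centre distance equals $|r_A+r_B|$, so $(r_A+r_B)^2$ is the correct condition and needs no adjustment. The following all check out:
\begin{itemize}
\item the cancellation of the quadratic terms;
\item $u\cdot v=-c$, $\mathbf a\cdot v=-(1-c)p_1$ and $\mathbf a\cdot u=(1-c)p_2$;
\item the target identity $\|\mathbf a\|^2/(2(1-c))+p_1p_2=p_3$.
\end{itemize}

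\textbf{The error in Step 3.} The reduced identity you display is mis-scaled. After product-to-sum it is equivalent to $\cos(\alpha-\beta)=\cos(\alpha+\beta-2\theta)$. This fails, for example, at $\alpha=\pi/2$, $\beta=-\pi/2$, $\theta=0$, where your left side is $1$ and your right side is $0$. Multiply the target identity by $(1-c)^2/\rho^2$ and use $(1-c)/2=\sin^2((\alpha-\beta)/2)$. You get instead
\begin{equation*}
\sin(\beta-\theta)\sin(\alpha-\theta)+\sin^2\left(\tfrac{\alpha-\beta}{2}\right)=\sin^2\left(\tfrac{\alpha+\beta}{2}-\theta\right).
\end{equation*}
Product-to-sum turns the left side into $\tfrac12\left(1-\cos(\alpha+\beta-2\theta)\right)$, which is exactly the right side.

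\textbf{Drop the fallback.} The mismatch you would have hit comes from this slip, not from the tangency sign. Switching to $(r_A-r_B)^2$ puts $-2(1+c)r_A$ into the coefficient of $r_B$, which moves the pole away from $r_A=p_1$. That version can never reproduce the stated formula. Fix the sign by the four-case geometric argument, not by fitting it to the answer.

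\textbf{Degenerate cases.} Your exclusions are right, and you should state them explicitly as corrections, because the lemma's ``for any'' and ``$\forall r_A\in\mathbb{R}$'' are overstated. What you actually prove is $(r_A-p_1)(r_B-p_2)=p_3$, valid for $\alpha\not\equiv\beta$.
\begin{itemize}
\item At $r_A=p_1$ with $p_3\neq0$ there is no finite $r_B$; this is the circle--straight (CS) limit.
\item If $p_3=0$, then $\mathbf a$ lies on the line through $\mathbf b$ in direction $(\alpha+\beta)/2$, so a single arc joins $A$ to $B$. Then every $r_B$ satisfies the equation at $r_A=p_1$, so $r_B$ is not determined there.
\end{itemize}
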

\section{Problem formulation}
\label{Section:Problem_Formulation}
Circumnavigation, in which pursuers maintain circular trajectories around a central target (like a landmark, beacon etc.), is pivotal for many operational scenarios. These range from surveillance, reconnaissance, and environmental sensing to coordinated multi-robot firefighting. Driven by these practical requirements, we address this problem for a group of $n$ heterogeneous autonomous pursuers.
\begin{figure}[ht]
\begin{center}
\includegraphics[scale=1.5]{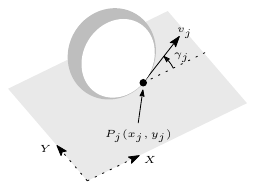}    
\caption{Unicycle model}  
\label{figure:unicycle_model}                                 
\end{center}                                 
\end{figure}

The pursuers are modelled as simplified point mass unicycles. The kinematics of a unicycle $j\in\mathcal{V}$ is described by:
\setlength{\abovedisplayskip}{2pt}
\setlength{\belowdisplayskip}{2pt}
\begin{equation}
\label{eq:unicycle_kinematics}
\dot{x_j} = v_j \cos \gamma_j, \quad
\dot{y_j} = v_j \sin \gamma_j, \quad
\dot{\gamma}_j = u_j,
\end{equation}
Here, $P_j(x_j,y_j)\in \mathbb{R}^2$ represents the position coordinates of pursuer $j$, $v_j\in \mathbb{R}^+$ is its constant linear speed, $\gamma_j \in \mathbb{S}^1$ is its heading angle relative to a fixed reference frame as shown in Fig. \ref{figure:unicycle_model}, and $u_j$ is the angular control input.
 
To better approximate real-world constraints where data availability is limited, we assume target location data is available only to a specific subset of the group. We define them as a partition of two disjoint sets: leaders ($\mathcal{L}$), who have access to the target coordinates, and followers ($\mathcal{F}$), who function without this direct information. While the unicycle model is inherently under-actuated, we impose an additional constraint to enhance practical feasibility: the forward speed $v_j$ is held constant, and control is exerted solely through asymmetric bounded angular input $u_j$. While this simplifies practical implementation, it exacerbates the control theoretical challenge by further reducing the system's actuation capabilities. 

That said, our system comprises two types of pursuers: leaders and followers. Since leaders possess more information, their control strategy should be less challenging than that of the followers. We therefore leverage an existing result from our previous work in Lemma \ref{lem:feas_traj}, in which we establish that a vehicle can transition between any two oriented points using CC trajectories. As it grants leaders the requisite degrees of freedom to independently prescribe $\alpha$, $\beta$, and $r_B$ (see Fig. \ref{fig:centers and radii}), thereby facilitating collision avoidance. This implies that any initial state can be guided to a desired circular orbit of a specified radius around the target. 

In scenarios involving multiple leaders, if the leaders can sense one another, it is not necessary to guide each leader independently for target circumnavigation. Instead, the leaders can be coordinated through a distributed guidance strategy, which improves scalability, robustness, operational efficiency, and resource utilisation, while also eliminating the issue of a `single point of failure'. To ensure such a distributed implementation, the following rule is imposed for leaders under mutual sensing.

\begin{Proposition}
\label{propo:leaders}
Consider a set of at least two leaders $\mathcal{L} = \{l_1, l_2, \dots\}$ such that each leaders can sense few other leaders, which can either be zero or non-zero. Suppose one leader $l_k \in \mathcal{L}$ is assigned the task of circumnavigating the target at a desired radius $R_{l_k}$. Then, the remaining leaders determine their respective desired radii according to the following rule:
\begin{subequations}
\label{eq:rule}
\begin{align}
\label{eq:rule_part1}
R_{l_i} &= \frac{r_{l_i}(0)}{\beta} + \alpha\, i
&& \text{if }  r_{l_i}(0) \geqslant r_{\mathcal{N}^{l_i}_{out}}(0), \\
\label{eq:rule_part2}
R_{l_i} &= \frac{r_{l_i}(0)}{\beta}
&& \text{if $r_{l_i}(0) < r_{\mathcal{N}^{l_i}_{out}}(0)$ or $r_{\mathcal{N}^{l_i}_{out}} \in \phi$},
\end{align}
\end{subequations}
where $\beta \in \mathbb{R}^+$ and $\alpha \in \mathbb{R}$ are any constants as long as $R_{l_i}>0$, $\phi$ denotes an empty set, $r_{l_i}(0)$ is the distance of each leader $l_i$ from the target at time $t=0$, and $\mathcal{N}^{l_i}_{out}$ denotes the set of out-neighbour leaders of \(l_i\). Accordingly, every leader follows Lemma~\ref{lem:feas_traj} to achieve their desired radii.
\end{Proposition}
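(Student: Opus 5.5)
The proposition asserts that the rule \eqref{eq:rule} gives every leader a well-defined, strictly positive desired radius using only information it has locally. Each such radius can then be reached from the leader's current oriented point through Lemma~\ref{lem:feas_traj}. So the proof splits into two parts: show the assignment is well posed and distributed, then show feasibility of reaching it.

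\emph{Well-posedness and locality.} Each leader $l_i$ knows the target location, so it knows $r_{l_i}(0)$. Through its sensing edges it can also obtain $r_{\mathcal{N}^{l_i}_{out}}(0)$, either because the out-neighbour leader shares its own range, or because $l_i$ reconstructs that range from the LOS to the neighbour together with its own position relative to the target. Hence the case distinction in \eqref{eq:rule_part1}--\eqref{eq:rule_part2} can be evaluated locally. If a leader has several out-neighbour leaders, I would fix one designated out-neighbour, such as the nearest one, as in the nearest-neighbour graph; this makes the comparison single-valued. If a leader has no out-neighbour, \eqref{eq:rule_part2} applies. Positivity is immediate in case \eqref{eq:rule_part2}, since $\beta>0$ and $r_{l_i}(0)>0$. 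In case \eqref{eq:rule_part1} it is enforced by the standing hypothesis on $\alpha$, which is admissible because there are finitely many leaders: choose $\alpha > -\min_i r_{l_i}(0)/(\beta\, i)$. The designated leader $l_k$ simply uses its assigned $R_{l_k}$.

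\emph{Feasibility.} For each leader, take the oriented point $A$ to be its initial pose. Take $B$ to be any point on the circle of radius $R_{l_i}$ centred at the target, with heading tangent to that circle. Choosing the target as the origin matches the form of $\mathbf{o_B}$ in Lemma~\ref{lem:feas_traj}. By that lemma, for any free $r_A$ there is a unique CC trajectory from $A$ to $B$, with $r_B = p_3/(r_A-p_1)+p_2$. After the vehicle arrives at $B$, it continues on the circle of radius $R_{l_i}$, which is the required circumnavigation. Since $r_A$ and $\beta$ (the heading at $B$) remain free, the leader can also pick them to avoid collisions and to respect its asymmetric turn bounds, for example by taking $|r_A|$ large enough that $v/|r_A|$ stays within the saturation limits.

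\emph{Main obstacle.} The delicate step is the degenerate case $\alpha=\beta$ in Lemma~\ref{lem:feas_traj}, where $1-\cos(\alpha-\beta)=0$ and the coefficients $p_1,p_2,p_3$ are undefined. A second difficulty is ensuring that $r_B$ equals the required orbit radius rather than an arbitrary value. I would handle both through the freedom in $B$: choose the point on the target circle and its tangent heading so that the heading differs from the initial heading, and select the orientation sense of the orbit so that the resulting $r_B$ matches $R_{l_i}$. Concretely, I would invert the relation to get $r_A = p_1 + p_3/(R_{l_i}-p_2)$ and pick $B$ so that $R_{l_i}\neq p_2$. The remaining check, that such a $B$ always exists, is where the most care is needed.
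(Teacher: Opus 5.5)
Your plan has the same two parts as the paper's justification: the rule is well posed, and each radius is reachable by Lemma~\ref{lem:feas_traj}. The paper gives no separate proof. It only says, in the paragraph before the proposition, that Lemma~\ref{lem:feas_traj} lets a leader prescribe the headings and $r_B$, so any initial pose can be steered onto an orbit of any prescribed radius. Your feasibility part, however, misapplies the lemma.

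\textbf{The frame is wrong.} The formula $\mathbf{o_B}=(-r_B\sin\beta,\,r_B\cos\beta)$, together with the fact that $p_1,p_2,p_3$ involve only $a_x,a_y$, shows that in Lemma~\ref{lem:feas_traj} the point $B$ itself sits at the origin. If you also put the target there, $\mathcal{C}_B$ passes through the target instead of being centred at it. You must compute $p_1,p_2,p_3$ from $\mathbf{a}-\mathbf{b}$. Making $\mathcal{C}_B$ the orbit means $\mathbf{o_B}=T$, i.e.\ $r_B=sR_{l_i}$, with $s=+1$ for counterclockwise and $s=-1$ for clockwise traversal.

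\textbf{The turn-bound remark also fails.} A large $|r_A|$ controls only the first arc: $r_B\to p_2$ as $|r_A|\to\infty$, and $|p_2|$ may be small. Moreover, once $r_B$ is pinned to $sR_{l_i}$, the value $r_A=p_1+p_3/(sR_{l_i}-p_2)$ is dictated by $B$ and is not free. The paper makes no turn-bound claim for leaders (see the Note after Theorem~\ref{thm:dubin_general_convergence}), so drop it.

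\textbf{Your ``second difficulty'' is self-imposed.} Apart from the frame, your first formulation was already sufficient. Once the leader reaches the oriented point $B$, applying $u_{l_i}=s\,v_{l_i}/R_{l_i}$ traces the orbit exactly, because a jump in angular rate at $B$ is admissible for a unicycle. So $r_B$ need not equal the orbit radius, and $r_A$ is essentially free. The only real degeneracy is $1-\cos(\alpha-\beta)=0$, where $\alpha,\beta$ are the headings of Lemma~\ref{lem:feas_traj}. It is removed by excluding the single point of the orbit whose tangent heading equals the initial heading $\alpha$.

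\textbf{If you want $\mathcal{C}_B$ to be the orbit itself} (the paper's ``prescribe $r_B$''), the step you left open closes quickly, but the lever is the orientation, not the position of $B$. Let $c_0$ be the signed distance from the target to the initial line of motion, positive when the target lies to the left of heading $\alpha$. For $B$ on the orbit with the tangent heading $\beta$ of sense $s$, a direct computation gives
\[
p_2-sR_{l_i}=\frac{c_0-sR_{l_i}}{1-\cos(\alpha-\beta)}.
\]
Hence $p_2\neq sR_{l_i}$ for every admissible $B$ unless $c_0=sR_{l_i}$. That case means the initial line of motion is itself tangent to the orbit in sense $s$; then, whichever $B$ you pick, only the limiting straight path ($r_A\to\infty$) reaches the orbit in that sense. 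Since $R_{l_i}>0$, at most one sense is excluded, so choose the other.

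\textbf{The well-posedness part needs two small corrections.}
\begin{itemize}
\item A bearing to a neighbour plus one's own position does not determine that neighbour's range to the target, because the inter-agent distance is missing. Only the option in which the neighbour transmits $r_{\mathcal{N}^{l_i}_{out}}(0)$ works.
\item Your lower bound on the constant $\alpha$ of \eqref{eq:rule_part1} requires a global minimum. Any $\alpha\geq 0$ already gives $R_{l_i}>0$ locally, provided $r_{l_i}(0)>0$.
\end{itemize}
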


This approach effectively guides the leaders to circumnavigate the target at prescribed radii. Now, our remaining primary objective is to achieve the same outcome for our followers. So, we write the problem statement formally.

\begin{problem}
\label{prob:main_problem}
Consider a group of heterogeneous asymmetric Dubins' vehicles governed by the kinematics given in eqn.~\eqref{eq:unicycle_kinematics}, operating under a distributed information framework in which only the leaders know the target location. The objective is to design a distributed guidance law \(u_j\), for every \(j \in \mathcal{V}\), such that each vehicle asymptotically converges to and circumnavigates a prescribed circular orbit around the target.
More precisely, for any initial condition $(x_j(0),\, y_j(0),\, \gamma_j(0)) \in \mathbb{R}^2 \times \mathbb{S}^1$,
the closed-loop system must ensure that $\lim_{t \to \infty} \left| r_j(t) - R_j \right| = 0,
\quad \forall\, j \in \mathcal{V}$,
where \(r_j(t)\) denotes the instantaneous distance of vehicle \(j\) from the target, and \(R_j > 0\) denotes the desired constant circumnavigation radius assigned to vehicle \(j\).
\end{problem}

With the problem statement formally defined, we present our main results in the following section.
\section{Main results}
\label{section:Guidance law}
In this section, we formulate guidance laws to direct followers into circumnavigating the target, ensuring their angular speeds synchronise with those of their designated leaders. We do so in two steps; we begin by analysing the fundamental scenario of a single unicycle $i$ circumnavigating its out-neighbour $j$, which serves as the basis for multi-agent convergence. We then generalise this to a group of $n$ pursuers. Finally, we extend the methodology to accommodate the complexities of asymmetric Dubins vehicles.

To formalise the interaction structure among the pursuers, we introduce the following graph theoretic definitions.
\begin{definition}[Sensing graph]
    \label{def:sensing_graph}
    The sensing graph is defined as $\mathcal{G}_S=(\mathcal{V},\mathcal{E}_S)$, where $i,j \in \mathcal{V}$ denote the nodes corresponding to the pursuers, and a directed edge $(i,j)\in \mathcal{E}_S$ exists if and only if pursuer $i$ can sense pursuer $j$.
\end{definition}
\begin{figure}[ht]
    \centering
        \begin{subfigure}[b]{0.2\textwidth}
        \centering    
        \includegraphics[width=\linewidth]{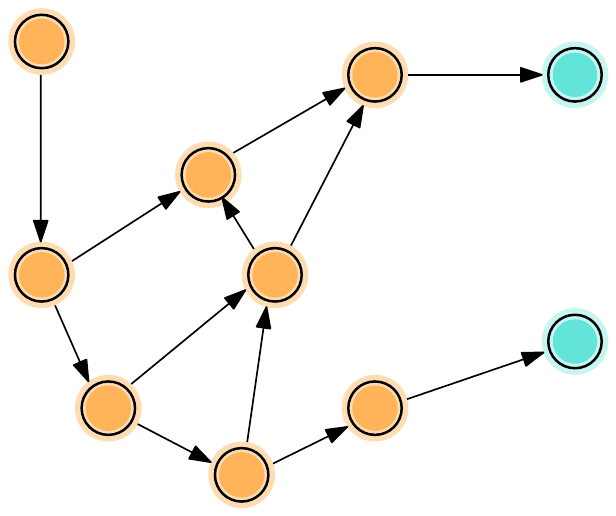}
        \caption{Sensing graph $\mathcal{G}_S$.}
        \label{fig:sensing_graph}
    \end{subfigure}
    \hfill
    \begin{subfigure}[b]{0.24\textwidth}
        \centering
        \includegraphics[width=0.9\linewidth]{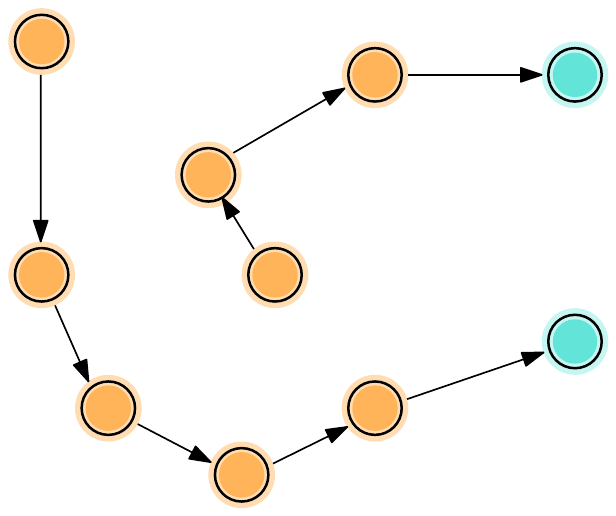}  
        \caption{Communication graph $\mathcal{G}_C$.}
        \label{fig:sense_to_communication_graph}
    \end{subfigure}
    \caption{Leaders and followers are shown in blue and orange colours, respectively}
    \label{fig:to_communication}
\end{figure}

\begin{definition}
    \label{def:communii_graph}
    The communication graph $\mathcal{G}_C=(\mathcal{V},\mathcal{E}_C)$ is a subgraph of $\mathcal{G}_S$. In this graph, each pursuer $i$ selects as its out-neighbour the closest pursuer it can sense in $\mathcal{G}_S$. If multiple nearest candidates exist, the pursuer may arbitrarily select any one of them (see Fig. \ref{fig:to_communication}). Consequently, each follower node has exactly one out-neighbour. Additionally, the leaders are sink nodes.
\end{definition}

Hence, $\mathcal{G}_C$ is not unique by construction and may contain multiple disconnected components even for a strongly connected $\mathcal{G_S}$. Furthermore, we assume the following about $\mathcal{G}_C$.
\begin{assumption}
    \label{assump:infor_flow}
    In the communication graph $\mathcal{G}_C$, we assume that every follower $f\in \mathcal{F}$ has a directed path to at least one leader in the set $\mathcal{L}$.
\end{assumption}

As discussed earlier, the pursuers in the leader set $\mathcal{L}$ have information about the target location and can guide the followers toward it. Assumption \ref{assump:infor_flow} ensures that the information from the leaders propagates to every follower in $\mathcal{F}$. With this, we proceed for the design of the guidance law.

\subsection{Convergence for a group of unicycles}
\label{subsec:for_two}
\begin{figure}[ht]
    \centering
    \includegraphics[scale=0.8]{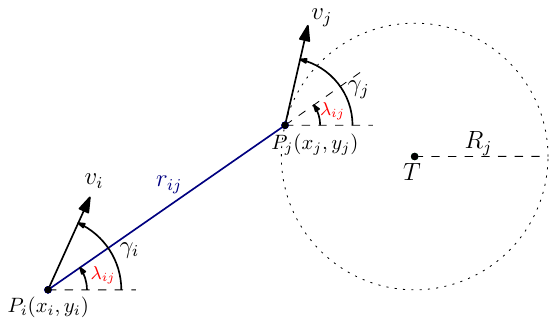}
    \caption{Engagement geometry}
    \label{fig:engage_geom}
\end{figure}
Consider a pursuer $i$, situated at a distance $r_{ij}$ from a pursuer $j$, which has already converged to a circular orbit around the target using Lemma \ref{lem:feas_traj}, as illustrated in Fig. \ref{fig:engage_geom}. Without loss of generality (WLOG), we assume the target is located at the origin $O$. The positions of pursuers $i$ and $j$ are denoted by $P_i$ and $P_j$, respectively. The LOS angle of the line connecting $i$ and $j$ at $P_i$ is denoted by $\lambda_{ij}$. The kinematics in polar coordinates are given by:
\begin{subequations}
\label{eq:kinematics_polar}
\begin{align}
\label{eq:(r_ij_dot)}
 \dot{r}_{ij} & = v_j\cos{(\gamma_j-\lambda_{ij})}-v_i\cos{(\gamma_i-\lambda_{ij})}  \\
 \label{eq:(lambda_ji_dot)}
 r_{ij}\dot{\lambda}_{ij} &= v_j\sin{(\gamma_j-\lambda_{ij})}-v_i\sin{(\gamma_i-\lambda_{ij})}.
 \end{align}
\end{subequations}

For any pursuer $i$ to orbit the same centre as its out-neighbours $j$ (already on a circular path) without range information, following geometrical conditions must be satisfied.
\begin{lemma}
    \label{lemm:unique_equilibrium}
    Consider any two pursuers $i,j\in \mathcal{V}$ with unicycle kinematics given in eqn. \ref{eq:unicycle_kinematics}, such that $(i,j)\in \mathcal{E}_C$. Suppose the pursuer $j$ traverses a fixed circular trajectory of radius $R_j$ (centred at target $T$ with angular speed $\omega_j$, see Fig.~\ref{fig:engage_geom}). Then the pursuer $i$ is guaranteed to move on a concentric circle with angular speed $\omega_j$ under the following two conditions.
\begin{enumerate}
    \item[a)] the heading angle difference remains constant, i.e. $\dot{\gamma}_j(t)-\dot{\gamma}_i(t)=0$,
    \item[b)] the offset angle between the velocity vector and LOS remains constant, i.e. $\dot{\lambda}_{ij}(t)-\dot{\gamma}_i(t)=0$.
\end{enumerate}
\end{lemma}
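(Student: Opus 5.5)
Under conditions a) and b), pursuer $i$ keeps a fixed shape relative to $j$: the relative heading $\gamma_j-\gamma_i$ and the offset $\gamma_i-\lambda_{ij}$ are both constant. Hence the bearing-relative angles $\gamma_j-\lambda_{ij}$ and $\gamma_i-\lambda_{ij}$ are constant too. The plan is to show that the relative position vector $P_j-P_i$ then simply rotates rigidly at rate $\omega_j$, so $i$ is a rotated and translated copy of a circular motion with the same centre.

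\textbf{Step 1: turn rates.} Since $j$ is on a circle of radius $R_j$ about $T$, it has constant turn rate $\dot\gamma_j=\omega_j=\pm v_j/R_j$. Condition a) then gives $\dot\gamma_i=\omega_j$, so $i$ also moves on some circle, of radius $v_i/|\omega_j|$, with angular speed $\omega_j$. Condition b) gives $\dot\lambda_{ij}=\omega_j$.

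\textbf{Step 2: range.} Since $\gamma_j-\lambda_{ij}$ and $\gamma_i-\lambda_{ij}$ are constant, the right-hand side of \eqref{eq:(r_ij_dot)} is a constant $c$.
\begin{itemize}
\item If $c\neq 0$, then $r_{ij}$ grows or decays linearly and is eventually unbounded or negative.
\item This contradicts \eqref{eq:(lambda_ji_dot)}: its right-hand side is also constant, so $r_{ij}\omega_j$ would have to be constant while $r_{ij}$ varies.
\end{itemize}
Hence $\dot r_{ij}=0$ and $r_{ij}$ is constant. In the degenerate case $\omega_j=0$ ($j$ moving straight), the statement is vacuous.

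\textbf{Step 3: common centre.} Write positions as complex numbers, $z_k=x_k+\iota y_k$ with $T$ at the origin. Then:
\begin{itemize}
\item $z_j(t)=R_j e^{\iota\theta_j(t)}$ with $\dot\theta_j=\omega_j$.
\item $z_i=z_j-r_{ij}e^{\iota\lambda_{ij}}$, where $r_{ij}$ is constant and $\lambda_{ij}(t)=\lambda_{ij}(0)+\omega_j t$.
\end{itemize}
So $z_i(t)=e^{\iota\omega_j t}\bigl(z_j(0)-r_{ij}e^{\iota\lambda_{ij}(0)}\bigr)=e^{\iota\omega_j t}z_i(0)$. Therefore $|z_i|$ is constant, i.e. $i$ moves on a circle centred at $T$ with angular speed $\omega_j$.

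It remains to check consistency. The velocity $\dot z_i=\iota\omega_j z_i$ must have magnitude $v_i$, which fixes the radius $R_i=v_i/|\omega_j|$. This matches Step 1, because the circle traced by a unicycle with turn rate $\omega_j$ is unique once its centre is fixed.

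\textbf{Main obstacle.} The delicate step is Step 2, excluding a nonzero constant $\dot r_{ij}$. I would handle it directly: differentiate \eqref{eq:(lambda_ji_dot)} in time. Its right-hand side is constant and $\dot\lambda_{ij}=\omega_j\neq0$, which forces $\dot r_{ij}\,\omega_j=0$. With that settled, the rigid-rotation identity in Step 3 closes the argument.
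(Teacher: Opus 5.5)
Your proof is correct, but it takes a different route from the paper's.

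\textbf{The paper's route.} It works in the absolute frame. It integrates condition a) to get the similarity relation $\overrightarrow{P_i}=\alpha\overrightarrow{P_j}+c$ with $\alpha=(v_i/v_j)e^{-\iota e_{i_1}}$, so $i$ lies on some circle of radius $|\alpha|R_j$. It then imposes condition b) as $\frac{d}{d\theta_j}\bigl(\arg\overrightarrow{r_{ij}}-\theta_j\bigr)=0$, using $\frac{d}{d\theta}\arg f=\mathrm{Im}(f'/f)$, and concludes that $c=0$ is the only admissible constant of integration.

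\textbf{Your route.} You stay in the relative polar frame of \eqref{eq:kinematics_polar}. Under a) and b) both right-hand sides are constant, so $r_{ij}\omega_j$ is constant. Since $\omega_j=\pm v_j/R_j\neq 0$, the range $r_{ij}$ is constant. Then $z_i=z_j-r_{ij}e^{\iota\lambda_{ij}}$ is a difference of two phasors rotating at the common rate $\omega_j$, which gives $z_i(t)=e^{\iota\omega_j t}z_i(0)$ at once.

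\textbf{What each buys.} Your argument is more elementary and fully rigorous. The paper only asserts that the argument condition ``does not simplify to zero'' when $c\neq 0$, and you never need that step. You also get the constancy of $r_{ij}$ claimed in Remark~\ref{remark:rigid_body} as a by-product. In exchange, the paper's version gives the explicit relation $\overrightarrow{P_i}=\alpha\overrightarrow{P_j}$, which directly shows the radius ratio $v_i/v_j$ and the phase offset $-e_{i_1}$ between the two pursuers.

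One small simplification: the first bullet of your Step 2 (linear growth of $r_{ij}$) is unnecessary. The relation $r_{ij}=K/\omega_j$, with $K$ the constant right-hand side of the second equation, already gives constant range.
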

\begin{proof}
The position of pursuer $j$ can be expressed as $\overrightarrow{P_j}(t)=x_j(t)+ \iota y_j(t)$. Further, since pursuer $j$ is circumnavigating the target, its position vector can be written as $\overrightarrow{P_j}(t)=R_j \exp{(\iota\theta_j(t))}$ where the radius $R_j$ is constant and $\theta_j$ is the angular coordinate. With reference to Fig. \ref{fig:engage_geom}, it follows from the circular geometry that $\gamma_j(t)=\theta_j(t)-\pi/2$. Its velocity vector is then given by $\overrightarrow{v_j}(t)=\frac{d}{dt}\overrightarrow{P_j}(t)=\iota R_j\dot{\theta}_j(t)\exp{(\iota\theta_j(t))}$. A constant $v_j$ necessitates that the $\dot{\theta}_j(t)$ is also constant.

Now we analyse the relative motion between the two pursuers under conditions (a) and (b) of the lemma statement, and we suitably define two error variables for pursuer $i$:
    \begin{subequations}
    \label{eq:error variables}
    \begin{align}
    \label{eq:e_1^j}
     e_{i_1} &\triangleq \gamma_j - \gamma_i \\
    \label{eq:e_2^j}
     e_{i_2} &\triangleq \lambda_{ij} - \gamma_i
    \end{align}
    \end{subequations}

First, we analyse the implication of condition (a) of the lemma statement. We start by substituting the condition $\gamma_i(t) = \gamma_j(t) - e_{i_1}$ into the expression for $\overrightarrow{v_i}(t)$ and simplifying, we get $\overrightarrow{v_i}(t)=(v_i/v_j)(\exp{(-\iota e_{i_1}}))\overrightarrow{v_j}(t)$.

So, $\overrightarrow{v_i}(t)$ is proportional to $\overrightarrow{v_j}(t)$ by a constant, $\alpha \triangleq (v_i/v_j)\exp{(-\iota e_{i_1})}$. Hence, exploiting the velocity and position relationship, we get $\overrightarrow{P_i}(t)=\alpha \overrightarrow{P_j}(t)+c$, where $c$ is a constant of integration. Further, $\overrightarrow{P_i}(t)=\alpha R_j\exp{\iota\theta_j(t)}+c$ as $\overrightarrow{P_j}(t)=R_j \exp{(\iota\theta_j(t))}$. Note that the equation describes a circular path for pursuer $i$ of radius $|\alpha|R_j$ and centred at the point $c$.
Next, on applying condition (b) of the lemma statement, the angular separation $e_{i_2}$ (as defined in eqn. \eqref{eq:e_2^j}) is constant. The LOS vector from $i$ to $j$ is $\overrightarrow{r_{ij}}(t) = \overrightarrow{P_j}(t)-\overrightarrow{P_i}(t)$, which is given by:
\begin{equation}
\label{eq:integrated_P_j}
\overrightarrow{r_{ij}}(t) = \overrightarrow{P_j}(t) - (\alpha\overrightarrow{P_j}(t)+c) = (1-\alpha)\overrightarrow{P_j}(t)-c.
\end{equation}
Expressed as a function of $\theta_j$, eqn. \eqref{eq:integrated_P_j} is $\overrightarrow{r_{ij}}(\theta_j) = (1-\alpha)R_j\exp{(\iota\theta_j(t))}-c$.
Then, condition (b), combined with $\gamma_j(t)=\theta_j(t)-\pi/2$, implies that $\arg{(\overrightarrow{r_{ij}}(\theta_j))} - \theta_j$ must be a constant. For this relationship to hold, its derivative w.r.t. $\theta_j$ must be zero:
\begin{equation}
\label{eq:lead_angle_condition}
    \dfrac{d}{d\theta_j}(\arg{\overrightarrow{r_{ij}}(\theta_j(t))} - \theta_j(t)) = 0.
\end{equation}
Using the identity for the derivative of an argument, $\frac{d}{d\theta_j}(\arg{f(\theta_j)})=\text{Img}\left(f'(\theta_j)/f(\theta_j)\right)$:
\begin{equation}
\label{eq:zero_condition}
\text{Img}\{(\overrightarrow{r_{ij}}'(\theta_j))/(\overrightarrow{r_{ij}}(\theta_j))\} - 1 = 0.
\end{equation}
We compute the derivative $\overrightarrow{r_{ij}}'(\theta_j)$ and the ratio becomes:
\begin{equation}
    \label{eq:dash_by_normal}
    \dfrac{\overrightarrow{R_{ij}}'(\theta_j)}{\overrightarrow{R_{ij}}(\theta_j)}=\dfrac{\iota(1-\alpha)R_j\exp{(\iota\theta_j})}{(1-\alpha)R_j\exp{(\iota\theta_j)}-c}.
\end{equation}
Substituting the result of the eqn. \eqref{eq:dash_by_normal} in the condition from the eqn. \eqref{eq:zero_condition}, we get:
\begin{equation}
\label{eq:arg_img}
 \text{Img}\left(\dfrac{\iota(1-\alpha)R_j\exp{(\iota\theta_j)}}{(1-\alpha)R_j\exp{(\iota\theta_j)}-c}\right) -1 = 0.  
\end{equation}
 If we test $c=0$, the expression simplifies to $\text{Img}(\iota) - 1 = 1-1 = 0$, which is true. If $c \neq 0$, the term in the eqn. \eqref{eq:arg_img} does not simplify to zero. Therefore, the only possible solution that satisfies the condition for all time is $c=0$.

Since $c=0$, the trajectory for $i$ is $\overrightarrow{P_i}(t)=\alpha\overrightarrow{P_j}(t)$. This confirms that pursuer $i$ circumnavigates the same centre $O$ as pursuer $j$ and shares the same angular velocity $\omega_j$.
\end{proof}

\begin{remark}
    \label{remark:rigid_body}
    When Lemma \ref{lemm:unique_equilibrium} holds, the triangle $\triangle TP_iP_j$ is a rigid body. Its side lengths,the radii $R_j = \|P_j(t) - O\|$ and $R_i = \|P_i(t) - O\|$, and the inter-agent distance $r_{ij} = \|P_j(t) - P_i(t)\|$ are all invariant for all $ t \in \mathbb{R} $.
\end{remark}

Lemma \ref{lemm:unique_equilibrium} identifies constant values of $e_{1i}$ and $e_{2i}$ as the necessary conditions for circumnavigation around a common centre. To implement these conditions across a multi-agent system without range or range-rate information, we present guidance laws in the following theorem.

\begin{thm}
\label{thm:two_pursuers}
    Consider a system of $n$ pursuers governed by kinematics defined in eqn. \eqref{eq:unicycle_kinematics}. Let every leader $l \in \mathcal{L}$ move on a stable circular path such that $\lim_{t\to t_f} r_l(t) = \bar{r}_l$ and $\dot\gamma_l(t)\to \bar{\omega}_l$. Under Assumption \ref{assump:infor_flow}, each follower $i\in \mathcal{F}$ is guaranteed to circumnavigate the target $T$ at the same angular rate as the leader at the terminus of its directed path in $\mathcal{G}_C$ under the following distributed guidance law:
    \begin{equation}
    \label{eq:follower_guidance}
    \dot{\gamma}_i = g(e_{i_1},e_{i_2}).
    \end{equation}
    Here, the control function $g(\cdot)$ must be chosen such that the function $h(\mathbf{e}_i-\bar{\mathbf{e}}_{i}) \triangleq -2\alpha (\mathbf{e}_i-\bar{\mathbf{e}}_{i})^T f(\mathbf{e}_i)$ remains continuous and positive definite in $\mathbb{R}^2 \setminus \{\mathbf{0}\}$, where function $f(\mathbf{e}_i)=\dot{\mathbf{e}}_i$. In this context, $\mathbf{e}_i=[e_{i_1} \quad e_{i_2}]^T$ represents the error states and $\bar{\mathbf{e}}_{i}=[\bar{e}_{i_1} \quad \bar{e}_{i_2}]$ denotes error states at equilibrium.
\end{thm}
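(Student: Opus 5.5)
Our plan is to prove Theorem~\ref{thm:two_pursuers} by first treating a single follower $i$ whose out-neighbour $j$ already moves on a circle about $T$. We then propagate the result along the directed paths of $\mathcal{G}_C$ guaranteed by Assumption~\ref{assump:infor_flow}.

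\textbf{Error dynamics.} For the pair $(i,j)$ we use the error states of eqn.~\eqref{eq:error variables}. Differentiating and using eqn.~\eqref{eq:kinematics_polar}, the error dynamics are
\begin{align*}
\dot e_{i_1} &= \bar\omega_j - g(e_{i_1},e_{i_2}),\\
\dot e_{i_2} &= \dot\lambda_{ij} - g(e_{i_1},e_{i_2}).
\end{align*}
Once $j$ is on its orbit, $\dot\gamma_j=\bar\omega_j$ is constant. Hence $f(\mathbf e_i)$ is well defined as an autonomous vector field whenever $\dot\lambda_{ij}$ can be written through the error states; on the rigid-triangle manifold of Remark~\ref{remark:rigid_body}, $\dot\lambda_{ij}=\bar\omega_j$. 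The equilibrium $\bar{\mathbf e}_i$ is the point where $\dot e_{i_1}=\dot e_{i_2}=0$, i.e.\ $g(\bar{\mathbf e}_i)=\bar\omega_j$.

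\textbf{Zubov construction.} Shift coordinates to $\tilde{\mathbf e}_i=\mathbf e_i-\bar{\mathbf e}_i$ and take
\[
V(\tilde{\mathbf e}_i)=1-\exp\!\big(-\alpha\|\tilde{\mathbf e}_i\|^2\big),\qquad \alpha>0,
\]
on $G=\mathbb{R}^2$. This $V$ is $C^1$ and positive definite, satisfies $0<V<1$ away from the origin, and $V\to1$ as $\|\tilde{\mathbf e}_i\|\to\infty$. A direct computation gives
\[
\frac{\partial V}{\partial \tilde{\mathbf e}_i}f=2\alpha\,\tilde{\mathbf e}_i^Tf\,e^{-\alpha\|\tilde{\mathbf e}_i\|^2}=-h(\tilde{\mathbf e}_i)\,[1-V],
\]
with $h=-2\alpha\,\tilde{\mathbf e}_i^Tf(\mathbf e_i)$. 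This is exactly eqn.~\eqref{eq:zubov_condition}. By the hypothesis on $g$, $h$ is continuous and positive definite, so Lemma~\ref{thm:zubov} yields asymptotic stability of $\bar{\mathbf e}_i$ with region of attraction $\mathbb{R}^2$, i.e.\ GAS.

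\textbf{Single-pair conclusion.} At the equilibrium, $\dot\gamma_j-\dot\gamma_i=0$ and $\dot\lambda_{ij}-\dot\gamma_i=0$. These are conditions (a) and (b) of Lemma~\ref{lemm:unique_equilibrium}, so $i$ moves on a circle concentric with $j$'s orbit about $T$, with angular rate $\bar\omega_j$.

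\textbf{Propagation along the graph.} By Definition~\ref{def:communii_graph} each follower has exactly one out-neighbour and leaders are sinks. Together with Assumption~\ref{assump:infor_flow}, every follower therefore lies on a unique directed path ending at a leader $l$. We induct on the distance (hop count) to that leader. The base case is a follower whose out-neighbour is $l$, which converges by the argument above. For the inductive step, the out-neighbour converges asymptotically to a circular motion, and its errors converge, so the follower's error system is an asymptotically autonomous perturbation of the nominal one. Since the nominal system is GAS, a cascade/converging-input argument gives convergence of the follower as well. The angular rate $\bar\omega_l$ is inherited at each step, which gives the claim.

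\textbf{Main obstacle.} We expect the hardest step to be justifying that $f$ is genuinely autonomous in $\mathbf e_i$. The term $\dot\lambda_{ij}$ involves $r_{ij}$ and $v_i,v_j$, and it equals $\bar\omega_j$ only on the rigid configuration. The first thing to try is to substitute eqn.~\eqref{eq:(lambda_ji_dot)}, write $\gamma_j-\lambda_{ij}=e_{i_1}-e_{i_2}$ and $\gamma_i-\lambda_{ij}=-e_{i_2}$, and treat $r_{ij}$ as bounded away from zero and infinity. The positivity requirement on $h$ would then be imposed uniformly in $r_{ij}$. The transient phase of the cascade, where $j$ is still converging, is the second delicate point.
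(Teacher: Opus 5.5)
Your route is the paper's route. You use the same error coordinates, the same Zubov pair $V=1-e^{-\alpha\|\mathbf z_i\|^2}$, $h=-2\alpha\,\mathbf z_i^T\tilde f(\mathbf z_i)$, and the same induction along the unique directed path to a leader, and you are more candid than the paper about its two weak points. But the patch you propose for the main obstacle cannot work, and the paper supplies none.

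Let $\bar r_{ij}$ be the inter-agent distance on the rigid configuration. The equilibrium conditions $\dot e_{i_2}=0$ and $g(\bar{\mathbf e}_i)=\bar\omega_j$ give $v_j\sin(\bar e_{i_1}-\bar e_{i_2})+v_i\sin\bar e_{i_2}=\bar\omega_j\bar r_{ij}$. Since $g$ does not depend on $r_{ij}$, the shifted field therefore satisfies $\tilde f(\mathbf 0;r_{ij})=[0,\ \bar\omega_j(\bar r_{ij}/r_{ij}-1)]^T$. This is nonzero whenever $r_{ij}\neq\bar r_{ij}$, because $\bar\omega_j\neq 0$ on a circle. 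So $h(\cdot\,;r_{ij})$ has a nonvanishing linear part at the origin and is negative at $\mathbf z_i=\epsilon\,\tilde f(\mathbf 0;r_{ij})$ for small $\epsilon>0$. No $g$ can make $h$ positive definite uniformly in $r_{ij}$.

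Freezing $r_{ij}=\bar r_{ij}$ does not help either, because that slice is not invariant. The dynamics $\dot r_{ij}=v_j\cos(e_{i_1}-e_{i_2})-v_i\cos e_{i_2}$ give $\partial\dot r_{ij}/\partial e_{i_2}=\bar\omega_j\bar r_{ij}\neq0$ at the equilibrium. The two-dimensional $\mathbf e_i$-system is simply not closed. An actual proof has to work with the autonomous three-dimensional system in $(e_{i_1},e_{i_2},r_{ij})$ on a domain with $r_{ij}>0$ and construct the Zubov function there. That function also needs $V\to1$ as $r_{ij}\to0$, where $\dot\lambda_{ij}$ blows up.

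The inductive step has a second, independent hole. Global asymptotic stability of the nominal system does not by itself give convergence under an asymptotically vanishing input. The converging-input converging-state property needs an ISS-type estimate or a priori boundedness of the trajectory. Here $e_{i_1}$ (an unwrapped angle) and $r_{ij}$ could drift during the out-neighbour's transient. The pair dynamics involve the out-neighbour only through $\dot\gamma_j(t)$, so the cascade input is exactly $\dot\gamma_j(t)-\bar\omega_j\to0$. What you must add is a boundedness argument for $(e_{i_1},e_{i_2},r_{ij})$, with $r_{ij}$ bounded away from $0$. The paper does not address this either; it simply assumes the out-neighbour $f_k$ is already exactly on its orbit when $f_{k+1}$ is analysed.
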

\begin{proof} 
We proceed in two steps. The first step validates the convergence of a single follower whose immediate out-neighbour in $\mathcal{G}_C$ is circumnavigating the target. The second step generalises this result to the rest of the system by applying inductive reasoning over the communication graph.

 Consider a single follower $i \in \mathcal{F}$ having an out-neighbour $j \in \mathcal{L}$ that is already on a stable circular path (see Fig. \ref{fig:engage_geom}). Now based on the equilibrium conditions from Lemma \ref{lemm:unique_equilibrium}, the two error variables $e_{i_1} = \gamma_j - \gamma_i$ and $e_{i_2} = \lambda_{ij} - \gamma_i$ are constant at equilibrium. Let the state vector be $\mathbf{e}_i = [e_{i_1} \quad e_{i_2}]^T$. The system dynamics are given by:
\begin{equation}
\label{eq:m_dot}
\dot{\mathbf{e}}_i = f(\mathbf{e}_i) = [\dot{e}_{i_1} \quad \dot{e}_{i_2}]^T = [f_1 \quad f_2]^T.
\end{equation}

Substituting the values of $\dot{e}_{i_1}$ and $\dot{e}_{i_2}$ in eqn. \eqref{eq:m_dot} gives: 
\begin{equation}
\label{eq:f(e)}
f(\mathbf{e}_i) = \begin{bmatrix} 
\omega_j - g(e_{i_1},e_{i_2}) \\
\dfrac{v_j \sin(e_{i_1} - e_{i_2}) + v_i \sin(e_{i_2})}{r_{ij}} - g(e_{i_1},e_{i_2})
\end{bmatrix}
\end{equation}

Let the equilibrium value of $\mathbf{e}_i$ be $\bar{\mathbf{e}}_{i}$. The equilibrium values $\bar{e}_{i_1}$ and $\bar{e}_{i_2}$ in eqn. \eqref{eq:f(e)} gives the condition:
\begin{equation}
    \label{eq:eq_cond}
    \omega_j = g(\bar{e}_{i_1},\bar{e}_{i_2}).
\end{equation}

With the equilibrium defined, we proceed to analyse its stability properties. We employ Zubov's Theorem for this purpose, as it allows the simultaneous characterisation of the region of attraction and the stability of the origin. Accordingly, we define a new state variable $\mathbf{z}_i=\mathbf{e}_i-\bar{\mathbf{e}}_{i}$ to shift equilibrium to origin, where $\mathbf{z}_i=[z_{i_1} \quad z_{i_2}]^T$. The new system dynamics are: 
\begin{equation}
    \label{eq:new_variable}
    \dot{\mathbf{z}}_i\triangleq\tilde{f}(\mathbf{z}_i)=\dot{\mathbf{e}}_i=f(\mathbf{z}_i+\bar{\mathbf{e}}_{i}).
\end{equation} 

Substituting $e_{i_1}=z_{i_1}+\bar{e}_{i_1}$, $e_{i_2}=z_{i_2}+e_{i_2}$, and $\omega_j$ (from eqn. \eqref{eq:eq_cond}) into eqn. \eqref{eq:f(e)}, we obtain:
\begin{equation}
\label{eq:modified_f}
\tilde{f}(\mathbf{z}_i)
=
\begin{bmatrix}
\tilde{f}_{i_1} \quad \tilde{f}_{i_2}
\end{bmatrix}^T,
\end{equation}
where $\tilde{f}_{i_1}= g(\bar{e}_{i_1},\bar{e}_{i_2})-g(z_{i_1}+\bar{e}_{i_1},z_{i_2}+\bar{e}_{i_2})$ and $\tilde{f}_{i_2}=
 (v_j \sin(z_{i_1}-z_{i_2}+\bar{e}_{i_1}-\bar{e}_{i_2})
 + v_i \sin(z_{i_2}+\bar{e}_{i_2})
 )/r_{ij}-g(z_{i_1}+\bar{e}_{i_1},z_{i_2}+\bar{e}_{i_2})$.

We choose functions $V(\mathbf{z}_i)=1-\exp{(-\alpha((z_{i_1})^2+(z_{i_2})^2))}$, where $\alpha \in \mathbb{R}^+$, and $h(\mathbf{z}_i)=-2\alpha[z_{i_1} \quad z_{i_2}]\tilde{f}(\mathbf{z}_i)$. These functions satisfy the conditions laid down in Lemma \ref{thm:zubov} in Section \ref{Section:preliminaries}. The function $V$ is continuously differentiable, positive definite in $\mathbb{R}^2$ ($G=\mathbb{R}^2$), and satisfies eqn. \eqref{eq:cond_V}. Furthermore, the limit of $V$ as $||\mathbf{z}_i||\rightarrow \infty$ is $1$.

The function $g(e_{i_1},e_{i_2})$ is selected to ensure that $h$ is positive definite. We then verify the condition:
\begin{equation}
    \label{eq:zubov_proof}
    \frac{\partial V}{\partial \mathbf{z}_i}\tilde{f}(\mathbf{z}_i)=-h(\mathbf{z}_i)[1-V(\mathbf{z}_i)].
\end{equation}

For this purpose, we substitute the partial derivative of $V(\mathbf{z}_i)$ and the system dynamics $\tilde{f}(\mathbf{z}_i)$ into the left-hand side of eqn. \eqref{eq:zubov_proof}. We obtain:
\begin{align} 
 \frac{\partial V}{\partial \mathbf{z}_i}\tilde{f}(\mathbf{z}_i)&= \underbrace{2\alpha e^{-\alpha ((z_{i_i})^2)+(z_{i_2})^2)} [z_{i_1} \quad z_{i_2}]}_{\partial{V}/\partial{\mathbf{z}_i}}\underbrace{[\tilde{f}_{i_1} \quad \tilde{f}_{i_2}]^T }_{\tilde{f}(\mathbf{z}_i)}  \notag \\  &= 2\alpha e^{-\alpha ((z_{i_1})^2)+(z_{i_2})^2)} \left( z_{i_1} \tilde{f}_{i_1} + z_{i_2} \tilde{f}_{i_2} \right). \label{eq:derivation_step2} 
\end{align}

Recalling proposed $h(\mathbf{z}_i)$ from above and observing that $1-V(\mathbf{z}_i) = e^{-\alpha ((z_{i_1})^2)+(z_{i_2})^2)^2}$, we can rewrite eqn. \eqref{eq:derivation_step2} as $-h(\mathbf{z}_i) [1-V(\mathbf{z}_i)]$, which verifies eqn. \eqref{eq:zubov_proof}.

As per Lemma \ref{thm:zubov} stated in Section \ref{Section:preliminaries}, the origin is asymptotically stable. Additionally, the region of attraction is $\mathbb{R}^2$. Hence, the origin is globally asymptotically stable. Next, we prove the stability of the entire group.

We define the radial error $d_{i_r}(t) \triangleq r_i(t)-\bar{r}_i$, where $r_i(t)$ is the distance from the target and $\bar{r}_i$ is a constant value. Leaders $l \in \mathcal{L}$ converge by definition in theorem statement, so $\lim_{t\to t_f} d_{l_r}(t) = 0$ and $\dot\gamma_l \to \bar{\omega}_l$.

Now, consider a path $l_m \leftarrow f_1 \leftarrow \dots \leftarrow f_k$. Assume follower $f_k$ has converged to a stable circular path with speed $\bar{\omega}_{l_m}$. For the next follower $f_{k+1}$, the pursuer $f_k$ acts as a stable leader. Applying the result from stage one of the proof, $f_{k+1}$ must also converge to a circular path and match the angular speed of $f_k$ (and thus of $l_m$). 
Since every follower has a path to a leader (Assumption \ref{assump:infor_flow}), this inductive argument cascades to all $f \in \mathcal{F}$. Hence, all pursuers converge to their respective radii and angular speeds. Hence, proved.
\end{proof}

Having established that a populous system can converge even with limited target information. We now address a practical constraint: achieving the same with input saturation.  
\subsection{Convergence for a group of asymmetric Dubins vehicles}
\label{subsec:dubin}
In this subsection, we present convergence results for a group of asymmetric Dubins vehicles. A Dubins vehicle is modelled as a unicycle moving with a constant linear speed and a bounded angular speed. Its kinematics are:
\begin{equation}
    \dot{x}_d = v_d \cos \gamma_d, \qquad
    \dot{y}_d = v_d \sin \gamma_d, \qquad
    \dot{\gamma}_d = u_d,
    \label{eq:kinematics_dubin}
\end{equation}
where $u_d \in [-\omega_{\max},\, \omega_{\max}]$ denotes the bounded turning rate. This model is consistent with the unicycle framework considered in this work, provided the control input $u_d$ is constrained within prescribed bounds. Such constraints are physically motivated, accounting for factors such as actuator saturation and computational limitations.

In general, the admissible control set can be expressed as $u_d = [u_{\min},\, u_{\max}]$, where typically $u_{\min} = -\omega_{\max}$ and $u_{\max} = \omega_{\max}$. However, the assumption of symmetric bounds, i.e., $|u_{\min}| = |u_{\max}|$ may not always be realistic. Asymmetric control limits can arise in several practical scenarios, such as an aircraft with a damaged aileron or missing wingtip \cite{bakolas2009asymmetric}, systems influenced by gyroscopic effects from a rotating engine \cite{lundgren2020asymmetric}, and marine vehicles subject to external disturbances like water currents \cite{parlangeli2024novel}. Such vehicles are called asymmetric Dubins vehicles. In such cases, it is more appropriate to model the different turning rates in both directions. Therefore, we present our results with different turning rates. But before presenting the convergence result for the entire system under input bounds, we analyse the behaviour of the desired control input from eqn. \eqref{eq:follower_guidance} when subject to saturation.
\begin{thm}
\label{lem:stauration_behaviour}
Consider a pursuer $j$ moving on a circular path with a constant angular speed $\omega_j$. A follower $i \in \mathcal{F}$ whose desired (unsaturated) angular rate is given by eqn. \eqref{eq:follower_guidance} in Theorem \ref{thm:two_pursuers}.

Assume the applied control is saturated such that 
$\dot{\gamma}_i(t)\in[\dot{\gamma}_i^{\mathrm{lower}},\,\dot{\gamma}_i^{\mathrm{upper}}]$ 
with $\dot{\gamma}_i^{\mathrm{lower}} < 0 < \dot{\gamma}_i^{\mathrm{upper}}$, and the actuator limits satisfy 
$\dot{\gamma}_i^{\mathrm{lower}} < \omega_j < \dot{\gamma}_i^{\mathrm{upper}}$.

Then, if the actuator is driven to either of its saturation limits, the desired control $u_i(t)$ necessarily exits saturation in finite time. Specifically:
\begin{enumerate}
  \item If $\dot{\gamma}_i(t) = \dot{\gamma}_i^{\mathrm{upper}}$ for all $t \geqslant t_0$, then there exists $t_1>t_0$ such that $u_i(t_1) < \dot{\gamma}_i^{\mathrm{upper}}$.
  \item If $\dot{\gamma}_i(t) = \dot{\gamma}_i^{\mathrm{lower}}$ for all $t \geqslant t_0$, then there exists $t_2>t_0$ such that $u_i(t_2) > \dot{\gamma}_i^{\mathrm{lower}}$.
\end{enumerate}
Consequently, saturation cannot persist indefinitely, and the desired control is always driven back into the admissible range.
\end{thm}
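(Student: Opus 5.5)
I will argue by contradiction, treating the two saturation cases separately but symmetrically. The main tool is the error dynamics $e_{i_1}=\gamma_j-\gamma_i$ from eqn. \eqref{eq:error variables}, together with the equilibrium relation $\omega_j=g(\bar e_{i_1},\bar e_{i_2})$ from eqn. \eqref{eq:eq_cond}. Note that $u_i(t)=g(e_{i_1}(t),e_{i_2}(t))$ is the desired control, while the applied rate is $\dot\gamma_i$.

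For case 1, suppose $\dot\gamma_i(t)=\dot\gamma_i^{\mathrm{upper}}$ for all $t\ge t_0$. Saturation at the upper limit means $u_i(t)\ge\dot\gamma_i^{\mathrm{upper}}$ for all $t\ge t_0$. Then
\[
\dot e_{i_1}=\omega_j-\dot\gamma_i^{\mathrm{upper}}=: -\delta<0,
\]
a constant, so $\gamma_i-\gamma_j$ grows linearly. Over each interval of length $2\pi/\delta$, the relative heading therefore sweeps a full revolution. Pursuer $i$ moves on a circle of radius $v_i/\dot\gamma_i^{\mathrm{upper}}$, while $j$ moves on its orbit.

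The key point is that $g$ is a function of angles on $\mathbb{S}^1\times\mathbb{S}^1$, hence periodic in $e_{i_1}$ (and $e_{i_2}$). The pair $(e_{i_1},e_{i_2})$ must then repeatedly visit configurations where $g<\dot\gamma_i^{\mathrm{upper}}$. The plan is to exhibit such a configuration. The equilibrium gives $g(\bar{\mathbf e}_i)=\omega_j<\dot\gamma_i^{\mathrm{upper}}$, and by continuity there is a neighbourhood $U$ of $\bar{\mathbf e}_i$ on which $g<\dot\gamma_i^{\mathrm{upper}}$.

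So it suffices to show that the saturated trajectory enters $U$ in finite time. The first coordinate passes $\bar e_{i_1}$ (mod $2\pi$) once per period $2\pi/\delta$. The remaining step is to control $e_{i_2}=\lambda_{ij}-\gamma_i$ at those instants. Its evolution follows from eqn. \eqref{eq:(lambda_ji_dot)}:
\[
\dot e_{i_2}=\frac{v_j\sin(e_{i_1}-e_{i_2})+v_i\sin e_{i_2}}{r_{ij}}-\dot\gamma_i^{\mathrm{upper}}.
\]
Both agents move periodically, $i$ on a fixed circle and $j$ on its orbit. The relative configuration is therefore quasi-periodic, or periodic when the frequencies are commensurate. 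I would argue that $e_{i_2}$ also drifts (mod $2\pi$), because $\lambda_{ij}$ stays bounded-rate whenever $r_{ij}$ is bounded below, while $\gamma_i$ rotates at constant rate. With both angles winding, the torus flow comes arbitrarily close to $\bar{\mathbf e}_i$, hence enters $U$.

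An alternative, which may be cleaner, is to use the Zubov structure of Theorem \ref{thm:two_pursuers}. Positive definiteness of $h$ gives $\mathbf z_i^T\tilde f(\mathbf z_i)<0$ for the unsaturated vector field. Under persistent upper saturation, $\tilde f_{i_1}$ is replaced by $\omega_j-\dot\gamma_i^{\mathrm{upper}}$, which is at least as large as the unsaturated value $\omega_j-g$ because $g\ge\dot\gamma_i^{\mathrm{upper}}$. Along the saturated motion, $e_{i_1}$ decreases monotonically without bound. Yet $g\ge\dot\gamma_i^{\mathrm{upper}}>\omega_j=g(\bar{\mathbf e}_i)$ must persist along a curve whose $e_{i_1}$-projection covers all of $\mathbb{S}^1$. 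This is incompatible with continuity and periodicity of $g$ near $\bar{\mathbf e}_i$ once the curve is shown to pass through $U$.

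Case 2 is identical with reversed inequalities. There $\dot e_{i_1}=\omega_j-\dot\gamma_i^{\mathrm{lower}}>0$, and the point where $g>\dot\gamma_i^{\mathrm{lower}}$ is again the equilibrium, since $g(\bar{\mathbf e}_i)=\omega_j>\dot\gamma_i^{\mathrm{lower}}$.

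The main obstacle is the recurrence step: guaranteeing that the saturated trajectory actually enters $U$, rather than merely crossing $e_{i_1}=\bar e_{i_1}$ with $e_{i_2}$ far from $\bar e_{i_2}$. I would first handle this via the explicit periodic geometry, both circles known, together with a density argument on the torus. If that fails, I would fall back on the weaker claim that $g$, being continuous and taking the value $\omega_j$ somewhere, cannot stay $\ge\dot\gamma_i^{\mathrm{upper}}$ on the entire reachable set, which must be verified from the explicit reachable set.
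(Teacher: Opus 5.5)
\textbf{Verdict: genuine gap.} The proof breaks at the step you flag yourself as the main obstacle, and the cause is your decision to treat $g$ as $2\pi$-periodic in $e_{i_1}$.

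\textbf{Periodicity is excluded by the hypotheses.} The $g$ in the statement is the one from Theorem~\ref{thm:two_pursuers}, which requires $h$ to be positive definite on $\mathbb{R}^2\setminus\{\mathbf 0\}$. On the line $z_{i_2}=0$ you have $h=2\alpha\, z_{i_1}(g(z_{i_1}+\bar e_{i_1},\bar e_{i_2})-\omega_j)$. If $g$ were $2\pi$-periodic in $e_{i_1}$, this would vanish at $z_{i_1}=2\pi$, contradicting positive definiteness. So $e_{i_1}$ must be read as an unwrapped real variable, as in the paper's choice $g=C_1e_{i_1}+C_2\sin e_{i_2}$.

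\textbf{The recurrence step is false, not just unproved.} Even inside your own framework, the saturated trajectory need not enter $U$:
\begin{itemize}
\item If $i$'s saturated circle (radius $v_i/\dot\gamma_i^{\mathrm{upper}}$) is centred at the target, $e_{i_2}$ is a function of $e_{i_1}$ alone. The orbit on the torus is then a single closed curve, which generically misses $\bar{\mathbf e}_i$ and hence misses a small $U$.
\item If that circle strictly encloses $j$'s orbit, $e_{i_2}$ stays inside an interval of length $\pi$ and never winds, so your drift claim fails. A bounded rate for $\lambda_{ij}$ does not stop it rotating on average with $\gamma_i$.
\item Your fallback does not follow: $g(\bar{\mathbf e}_i)=\omega_j$ says nothing about $g$ on a set bounded away from $\bar{\mathbf e}_i$.
\item The Zubov variant ends with the same unproved entry into $U$.
\end{itemize}

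\textbf{The missing idea: no return to the equilibrium is needed.} The paper argues directly. Under persistent upper saturation, $e_{i_1}(t)=e_{i_1}(t_0)+(\omega_j-\dot\gamma_i^{\mathrm{upper}})(t-t_0)\to-\infty$ as a real number. The paper assumes $g$ has a term growing without bound in $e_{i_1}$ (e.g.\ $C_1e_{i_1}$ with $C_1>0$), while its dependence on $e_{i_2}$ is bounded. Hence $u_i(t)=g(e_{i_1}(t),e_{i_2}(t))\to-\infty$. By continuity there is a finite $t_1$ with $u_i(t_1)<\dot\gamma_i^{\mathrm{upper}}$, contradicting persistent saturation. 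The lower case is symmetric.

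This growth property is an extra structural assumption on $g$. The paper asserts it inside its proof; it is not implied by the hypotheses of Theorem~\ref{thm:two_pursuers}. To close your argument, adopt it in place of periodicity, or use an equivalent coercivity of $g$ in $e_{i_1}$ that holds uniformly in $e_{i_2}$.
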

\begin{proof}
Differentiating errors $e_{i_1}$ and $e_{i_2}$ gives $\dot{e}_{i_1}(t)=\dot{\gamma}_j(t)-\dot{\gamma}_i(t)=\omega_j-\dot{\gamma}_i(t)$. Since pursuer $j$
is on a circular path with constant angular speed, $\dot{\gamma}_j(t)\equiv\omega_j$. Now, we examine the two saturation cases:

\textit{Case A (Upper Saturation):} Assume $\dot{\gamma}_i(t)=\dot{\gamma}_i^{\mathrm{upper}}$ for all $t\geqslant t_0$. Then, $\dot{e}_{i_1}=\omega_j-\dot{\gamma}_i^{upper}<0$, hence $e_{i_1}(t)$ tends to $-\infty$ as $t$ tends to $\infty$. The desired (unsaturated) control input $u_i(t)=g(e_{i_1}(t),e_{i_2}(t))$ contains a term that is strictly decreasing w.r.t. $e_{i_1}$ and the term $e_{i_2}$ is finite. Consequently, $u_i(t)$ tends to $-\infty$ as $t$ tends to $\infty$. Therefore, there exists a finite $t_1>t_0$ such that $u_i(t_1)<\dot{\gamma}_i^{\mathrm{upper}}$, which contradicts the assumption of persistent upper saturation. Hence, upper saturation cannot be maintained indefinitely.

\textit{Case B (Lower Saturation):} 
Assume instead that $\dot{\gamma}_i(t)=\dot{\gamma}_i^{\mathrm{lower}}$ for all $t\geqslant t_0$. Then, $\dot{e}_{i_1}=\omega_j-\dot{\gamma}_i^{lower}>0$, hence $e_{i_1}(t)$ tends to $+\infty$ as $t$ tends to $\infty$. By the same argument as above, the desired control $u_i(t)$ diverges to $+\infty$, and hence there exists a finite time $t_2>t_0$ such that $u_i(t_2)>\dot{\gamma}_j^{\mathrm{lower}}$, 
showing that sustained lower saturation is also impossible.

\begin{remark}
    \label{rem:sustained saturation}
    If $\omega_j=\dot{\gamma}_i^{upper}$ or $\omega_j=\dot{\gamma}_i^{lower}$, then $\dot{e}_{i_1}=0$ during saturation and error remains constant. This equilibrium corresponds to a non-generic, measure-zero case and is unstable to arbitrarily small perturbations. The strict condition $\dot{\gamma}_i^{lower}<\omega_j<\dot{\gamma}_i^{upper}$ ensures that $\dot{e}_{i_1}(t) \neq 0$, guaranteeing finite-time exit from saturation. 
\end{remark}

Hence, under admissible asymmetric actuator limits, any saturation of the follower’s angular speed is necessarily transient, and the desired control re-enters the admissible interval in finite time. Hence, proved.
\end{proof}

With the confirmation that the desired control input always returns to the admissible range, we can state the next result.
\begin{cor}
    \label{lem:dubin_convergence_2}
    In a system of $n$ pursuers, if any pursuer $j\in \mathcal{V}$ moves on a circular path and its in-neighbour $i\in \mathcal{F}$ follows the guidance law from Theorem \ref{thm:two_pursuers} with actuator bounds $\{\dot\gamma_i^{\mathrm{lower}},\dot\gamma_i^{\mathrm{upper}}\}$ such that $\dot\gamma_i^{\mathrm{lower}}<\omega_j<\dot\gamma_i^{\mathrm{upper}}$, then pursuer $i$ is guaranteed to converge to a circular trajectory around the target with the same angular speed as $j$.
\end{cor}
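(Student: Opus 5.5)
The plan is to combine Theorem~\ref{lem:stauration_behaviour} with the unsaturated Zubov argument of Theorem~\ref{thm:two_pursuers}. The closed loop for follower $i$ is the error system of eqn.~\eqref{eq:m_dot}, but with the applied rate $\dot\gamma_i=\mathrm{sat}(g(e_{i_1},e_{i_2}))$, where $\mathrm{sat}(\cdot)$ clips to $[\dot\gamma_i^{\mathrm{lower}},\dot\gamma_i^{\mathrm{upper}}]$. The first step is to check that the equilibrium is unaffected by saturation. The equilibrium condition eqn.~\eqref{eq:eq_cond} requires $g(\bar e_{i_1},\bar e_{i_2})=\omega_j$. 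Since $\dot\gamma_i^{\mathrm{lower}}<\omega_j<\dot\gamma_i^{\mathrm{upper}}$, this value lies strictly inside the admissible interval. By continuity of $g$, there is then a neighbourhood $\mathcal{U}$ of $\bar{\mathbf e}_i$ on which $\mathrm{sat}(g)=g$. On $\mathcal{U}$ the saturated and unsaturated vector fields coincide, so Zubov's function $V(\mathbf z_i)=1-\exp(-\alpha\|\mathbf z_i\|^2)$ still satisfies eqn.~\eqref{eq:zubov_proof} there.

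The second step splits the trajectory into saturated and unsaturated intervals. Along any arc on which the control is unsaturated, the computation of Theorem~\ref{thm:two_pursuers} gives $\dot V=-h(\mathbf z_i)(1-V)\le 0$, with strict decrease away from the origin. Along a saturated arc, Theorem~\ref{lem:stauration_behaviour} guarantees that saturation cannot persist indefinitely. The desired control therefore re-enters the admissible range in finite time, and the trajectory cannot remain saturated forever. I would then argue that the trajectory must eventually enter a sublevel set $\Omega_c=\{V\le c\}\subset\mathcal U$ and stay there. Inside $\Omega_c$ the closed loop is exactly the unsaturated one, so $\Omega_c$ is forward invariant because $\dot V\le0$, and Zubov's theorem (Lemma~\ref{thm:zubov}) gives $\mathbf z_i\to 0$.

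The main obstacle is the middle step: the trajectory might alternate between saturated and unsaturated arcs infinitely often without ever entering $\Omega_c$. During saturated arcs $V$ may increase, so monotonicity is lost. My first attempt is a dwell/accumulation argument. On unsaturated arcs, $h$ is positive definite and bounded below on $\{V\ge c\}$ intersected with a compact set, so $V$ drops by a fixed amount per unit unsaturated time. Meanwhile $e_{i_1}$ is monotone on each saturated arc, with $\dot e_{i_1}=\omega_j-\dot\gamma_i^{\mathrm{upper/lower}}$. Because $g$ is strictly decreasing in $e_{i_1}$, as used in Theorem~\ref{lem:stauration_behaviour}, each saturated arc pushes $e_{i_1}$ toward the region where $g$ re-enters the admissible band, rather than away from it. This should bound the growth of $V$ during saturation relative to the preceding unsaturated decrease. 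If that direct bound fails, I would instead use a LaSalle-type argument on the $\omega$-limit set. Theorem~\ref{lem:stauration_behaviour} excludes limit sets contained in the saturated region. Strict decrease of $V$ excludes non-trivial invariant sets in the unsaturated region. Together these force the limit set to be $\{\mathbf 0\}$.

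Finally, once $\mathbf e_i\to\bar{\mathbf e}_i$, both $e_{i_1}$ and $e_{i_2}$ are asymptotically constant. Lemma~\ref{lemm:unique_equilibrium} then says pursuer $i$ moves on a circle concentric with that of $j$, centred at the target, with angular speed $\omega_j$. This gives the claim.
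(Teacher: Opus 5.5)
Your route matches the paper's. Theorem~\ref{lem:stauration_behaviour} makes every saturated episode finite. Near $\bar{\mathbf e}_i$ the saturated and unsaturated loops coincide; your first step, via $\dot\gamma_i^{\mathrm{lower}}<\omega_j<\dot\gamma_i^{\mathrm{upper}}$ and continuity of $g$, is fine. Theorem~\ref{thm:two_pursuers} then finishes. However, the step you flag yourself, that the trajectory eventually enters some $\{V\le c\}\subset\mathcal{U}$, is a genuine gap, and neither proposed fix closes it. On an upper-saturated arc
\[
\dot V=2\alpha(1-V)\Bigl[z_{i_1}\bigl(\omega_j-\dot\gamma_i^{\mathrm{upper}}\bigr)+z_{i_2}\Bigl(\tfrac{v_j\sin(e_{i_1}-e_{i_2})+v_i\sin e_{i_2}}{r_{ij}}-\dot\gamma_i^{\mathrm{upper}}\Bigr)\Bigr].
\]
The first term is negative only if $z_{i_1}>0$. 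Saturation does not force this for a general $g$, since one only knows $g(\mathbf e_i)>\dot\gamma_i^{\mathrm{upper}}>g(\bar{\mathbf e}_i)$. The second term has no sign at all, so $V$ can increase, and nothing you cite bounds the increase.

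Your dwell argument addresses at most the first term. Driving $e_{i_1}$ toward the re-entry set $\{g=\dot\gamma_i^{\mathrm{upper}}\}$ controls $g$, not $\|\mathbf z_i\|$; for $g=C_1e_{i_1}+C_2\sin e_{i_2}$ that set contains points arbitrarily far from $\bar{\mathbf e}_i$. The ``toward'' claim also needs $g$ \emph{increasing} in $e_{i_1}$. That holds for $C_1>0$ and is consistent with $h>0$ along $z_{i_2}=0$. With $g$ decreasing, as you state following the wording of Theorem~\ref{lem:stauration_behaviour}, upper saturation would push $g$ further above the limit. Finally, Theorem~\ref{lem:stauration_behaviour} is purely qualitative: it gives no uniform bound on the length of a saturated arc, nor on the state at exit. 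Unsaturated arcs may also be arbitrarily short. So a decay rate for $V$ on unsaturated arcs yields no net decrease per cycle, and that rate itself presupposes a compact set containing the trajectory, which you have not established.

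The LaSalle alternative fails structurally. The invariance principle needs a function that is nonincreasing along the \emph{whole} closed loop, plus precompact trajectories, and you have neither. Excluding $\omega$-limit sets contained in the saturated region, and those contained in the unsaturated region, leaves precisely the dangerous case. That case is a limit set crossing the saturation boundary, e.g.\ a periodic orbit on which $V$ rises while saturated and falls while unsaturated. There is a further problem: $\tilde f_{i_2}$ depends on $r_{ij}$, and its dynamics $\dot r_{ij}=v_j\cos(e_{i_1}-e_{i_2})-v_i\cos e_{i_2}$ are what close the system. So any limit-set argument lives in $(\mathbf z_i,r_{ij})$, where $V$ is only semidefinite and bounds on $r_{ij}$ away from $0$ and $\infty$ are unknown.

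The paper's own proof follows the same route and simply asserts this step (``each saturation period is finite \dots\ progressively moving towards the stable equilibrium''), so there is no ingredient there for you to borrow. Closing the gap needs something new. One option is a Lyapunov function that is nonincreasing on saturated arcs as well. Another, for a concrete $g$, is quantitative bounds on the duration of saturated arcs and the growth of $V$ along them, together with a priori bounds on $r_{ij}$.
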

\begin{proof}
    From Theorem \ref{lem:stauration_behaviour}, we know that the desired control input enters the admissible interval $[\dot\gamma_i^{\mathrm{lower}},\dot\gamma_i^{\mathrm{upper}}]$ within a finite time. Once inside these bounds, the guidance law from Theorem \ref{thm:two_pursuers} drives pursuer $i$ towards convergence. While the control input for pursuer $i$ may saturate multiple times during the engagement, each saturation period is finite. The system repeatedly drives the control back into the non-saturated region, progressively moving towards the stable equilibrium. Eventually, pursuer $i$ settles onto its stable circular orbit.
\end{proof}

By combining the preceding results, we can state a more general result for a network of Dubins vehicles.
\begin{thm}
    \label{thm:dubin_general_convergence}
Let $\mathcal{V}_d$ denote a set of $n$ asymmetric Dubins vehicles communicating over a directed graph 
$\mathcal{G}_{d_C}=(\mathcal{V}_d,\mathcal{E}_{d_C})$.  
The node set $\mathcal{V}_d$ is partitioned into leaders $\mathcal{L}_d\subset\mathcal{V}_d$ and followers 
$\mathcal{F}_d=\mathcal{V}_d\setminus\mathcal{L}_d$. For each vehicle $i\in\mathcal{V}^d$, the state $(x_i(t),y_i(t),\gamma_i(t))$ evolves according to the Dubins kinematics given in eqn. \eqref{eq:kinematics_dubin}, where $v_i$ remains constant and the applied control satisfies $u_i(t)\in[\dot{\gamma}_i^{\mathrm{lower}},\,\dot{\gamma}_i^{\mathrm{upper}}]$ with
$\dot{\gamma}_i^{\mathrm{lower}}<0<\dot{\gamma}_i^{\mathrm{upper}}$. Assume the following conditions hold:
\begin{enumerate}
    \item[(a)] Each leader $l\in\mathcal{L}_d$ circumnavigates the target at a feasible desired circular orbit as guaranteed by 
    Lemma \ref{lem:feas_traj}, i.e., $\lim_{t\to t_{l_f}} r_l(t)=\bar{r}_l$ and $\lim_{t\to t_{l_f}} \dot{\gamma}_l(t)=\bar{\omega}_l$ after some finite time $t_{l_f}$.
    \item[(b)] Each follower $f\in\mathcal{F}_d$, having an out-neighbour $j\in \mathcal{V}_d$, applies the
    saturated guidance law
    \begin{align}
    \label{ew:satu_gidance}
        u_f(t) = \mathrm{sat}\big(g(e_{f_1},e_{f_2}\big),
    \end{align}
    where $\mathrm{sat}(\cdot)$ projects the desired angular rate into $[\dot{\gamma}_f^{\mathrm{lower}},\,\dot{\gamma}_f^{\mathrm{upper}}]$ and function $g(\cdot)$ is defined similarly as in eqn. \eqref{eq:follower_guidance}. The actuator bounds satisfy $\dot{\gamma}_f^{\mathrm{lower}} < \bar{\omega}_{j} < \dot{\gamma}_f^{\mathrm{upper}}$, ensuring the feasibility of tracking the out-neighbour’s steady-state orbit.

    \item[(c)] For every follower $f\in\mathcal{F}_d$, there exists at least one directed path in $\mathcal{G}_{d_C}$
    from $f$ to some leader $l\in\mathcal{L}_d$.
\end{enumerate}

Then, for all $i\in\mathcal{V}$, $\lim_{t\to\infty} r_i(t)=\bar{r}_i$ and $\lim_{t\to\infty} \dot{\gamma}_i(t) = \bar{\omega}_i$, where $\bar{\omega}_i$ equals the steady-state angular speed $\bar{\omega}_{l_m}$ of the leader $l_m$ that terminates the directed path containing $i$.
\end{thm}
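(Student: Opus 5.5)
The plan is an induction over the directed paths of $\mathcal{G}_{d_C}$, combining Theorem \ref{thm:two_pursuers}, Theorem \ref{lem:stauration_behaviour} and Corollary \ref{lem:dubin_convergence_2}.

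\emph{Step 1 (structure of $\mathcal{G}_{d_C}$).} By Definition \ref{def:communii_graph}, every follower has exactly one out-neighbour and every leader is a sink. Condition (c) then says that the unique forward path from any follower $f$ ends at a leader $l_m$ after finitely many hops. In particular, this forward path contains no directed cycle. Define the depth $d(i)$ of node $i$ as the length of this path, with $d(l)=0$ for every leader. The induction will be on $d(i)$.

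\emph{Step 2 (base case).} For $d=0$, condition (a) gives $r_l(t)\to\bar r_l$ and $\dot\gamma_l(t)\to\bar\omega_l$ after the finite time $t_{l_f}$. Lemma \ref{lem:feas_traj} guarantees that the CC trajectory respects the leader's own bounds, so the claim holds trivially for leaders.

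\emph{Step 3 (inductive step).} Take a follower $f$ with out-neighbour $j$, where $d(j)=d(f)-1$, and assume $j$ converges to a circle of radius $\bar r_j$ with angular speed $\bar\omega_j=\bar\omega_{l_m}$.
\begin{itemize}
\item Apply Theorem \ref{lem:stauration_behaviour} with the strict bound $\dot\gamma_f^{\mathrm{lower}}<\bar\omega_j<\dot\gamma_f^{\mathrm{upper}}$ from condition (b). This gives that $\mathrm{sat}(g)$ cannot stay at either limit indefinitely.
\item Corollary \ref{lem:dubin_convergence_2} then gives convergence of $\mathbf{e}_f$ to $\bar{\mathbf{e}}_f$. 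Indeed, on unsaturated intervals the Zubov function $V(\mathbf{z}_f)=1-\exp(-\alpha\|\mathbf{z}_f\|^2)$ strictly decreases, by Theorem \ref{thm:two_pursuers}.
\item By Lemma \ref{lemm:unique_equilibrium} and Remark \ref{remark:rigid_body}, the limit configuration is a circle concentric with $j$'s, so $r_f\to\bar r_f$ and $\dot\gamma_f\to\bar\omega_j$.
\end{itemize}
Since the depth is finite, the induction covers all of $\mathcal{V}_d$.

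\emph{Main obstacle.} The difficulty is that Theorem \ref{thm:two_pursuers} and Corollary \ref{lem:dubin_convergence_2} assume $j$ is \emph{already} on its circle. In the cascade, $j$ only converges asymptotically, so the error dynamics of $f$ are perturbed by a vanishing term, through $\dot\gamma_j-\bar\omega_j$ and through $r_{ij}$.

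My first attempt would be to treat the closed loop of $f$ as an asymptotically autonomous system whose limit is the GAS system of Theorem \ref{thm:two_pursuers}. I would then invoke a converging-input converging-state argument, using the Zubov function as a Lyapunov function: $\dot V\le -h(1-V)+\mathcal{O}(\|\dot\gamma_j-\bar\omega_j\|)$. Boundedness of the error states comes for free, since they are angles and $V<1$, so the perturbation term only needs to vanish.

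Before the limit strict inequality is reached, $\dot\gamma_j$ may violate the strict bound. I would handle this by restricting to $t\ge T$, where $T$ is chosen so that $|\dot\gamma_j-\bar\omega_j|$ is below the saturation margin. After $T$, Theorem \ref{lem:stauration_behaviour} applies verbatim.
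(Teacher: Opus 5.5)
Your overall route is the paper's. Its proof of this theorem is the single sentence that the result follows by combining Theorem~\ref{thm:two_pursuers}, Theorem~\ref{lem:stauration_behaviour} and Corollary~\ref{lem:dubin_convergence_2}, with the induction along paths to the leaders inherited from the end of the proof of Theorem~\ref{thm:two_pursuers}. You correctly spot what this glosses over: Corollary~\ref{lem:dubin_convergence_2} needs the out-neighbour to be \emph{exactly} on its circle, whereas in the cascade it only converges asymptotically. Since (a) puts the leaders on their orbits after the finite time $t_{l_f}$, this only matters from depth $2$ on; a depth-$1$ follower can restart the clock at $t_{l_f}$ and use the corollary literally. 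Also, $r_{fj}$ is not a perturbation. Because $\dot r_{fj}=v_j\cos(e_{f_1}-e_{f_2})-v_f\cos e_{f_2}$ and $r_{fj}\dot\lambda_{fj}=v_j\sin(e_{f_1}-e_{f_2})+v_f\sin e_{f_2}$, the relative system in $(e_{f_1},e_{f_2},r_{fj})$ is closed. The only exogenous signal is $\dot\gamma_j-\bar\omega_j$, and it enters additively in $\dot e_{f_1}$.

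\textbf{The gap is in your patch for depth $\ge 2$.} A converging-input converging-state argument with a GAS limit system requires the trajectory to be bounded, and your reason for boundedness does not work.
\begin{itemize}
\item $V<1$ holds at every finite state. Boundedness of $\mathbf z_f$ means $\sup_t V(\mathbf z_f(t))<1$, which is exactly what is at stake.
\item The errors being angles does not help. The paper treats them as unwrapped reals: Theorem~\ref{lem:stauration_behaviour} uses $e_{i_1}\to\pm\infty$, and the Zubov domain is $G=\mathbb{R}^2$. Modulo $2\pi$, the function $1-\exp(-\alpha\|\mathbf z\|^2)$ is not well defined, and a globally asymptotically stable equilibrium on a torus is topologically impossible anyway.
\item On unsaturated intervals one has exactly $\frac{d}{dt}\|\mathbf z_f\|^2=-h(\mathbf z_f)/\alpha+2z_{f_1}(\dot\gamma_j-\bar\omega_j)$. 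Far from the origin, the sign is decided by comparing $h(\mathbf z_f)$ with $2\alpha|z_{f_1}|\,|\dot\gamma_j-\bar\omega_j|$.
\item Positive definiteness of $h$ is all Theorem~\ref{thm:two_pursuers} assumes, and it allows $h$ to stay bounded or even decay at infinity. So nothing prevents a vanishing but non-integrable $\dot\gamma_j-\bar\omega_j$ from pushing $\|\mathbf z_f\|$ to infinity. Your inequality $\dot V\le -h(1-V)+\mathcal O(|\dot\gamma_j-\bar\omega_j|)$ cannot exclude this.
\end{itemize}

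You need an extra ingredient. One option is a growth condition such as $h(\mathbf z)\ge c\|\mathbf z\|$ outside a ball. Another is an explicit use of saturation. For instance, take $g=C_1e_1+C_2\sin e_2$ with $C_1>0$. Once $|\dot\gamma_j-\bar\omega_j|$ is below the margin in (b), large $|e_{f_1}|$ forces $\mathrm{sat}(g)$ to the appropriate limit and makes $\dot e_{f_1}$ restoring, which bounds $e_{f_1}$. The component $e_{f_2}$ still needs its own argument.

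Note also that the limit system for $t\ge T$ is the \emph{saturated} closed loop, not the unsaturated one of Theorem~\ref{thm:two_pursuers}. Its global asymptotic stability comes from Corollary~\ref{lem:dubin_convergence_2} for attractivity, plus Theorem~\ref{thm:two_pursuers} locally. The local part works because $g(\bar{\mathbf e}_f)=\bar\omega_j$ lies strictly inside the actuator interval, so there is no saturation near $\bar{\mathbf e}_f$.

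\textbf{Smaller points.}
\begin{itemize}
\item Lemma~\ref{lem:feas_traj} is only an existence result for CC paths and says nothing about the leaders' turn-rate bounds; the paper's Note concedes this. The base case is just hypothesis (a).
\item Your reason for Corollary~\ref{lem:dubin_convergence_2}, that $V$ decreases on unsaturated intervals, does not give convergence by itself. $V$ may increase on saturated intervals, and there may be infinitely many of them. You are really using the corollary as a black box, as the paper does.
\item Lemma~\ref{lemm:unique_equilibrium} and Remark~\ref{remark:rigid_body} concern exactly constant errors. To get $r_f\to\bar r_f$ from $\mathbf e_f\to\bar{\mathbf e}_f$, use $P_f=P_j-r_{fj}e^{\iota(\gamma_j-e_{f_1}+e_{f_2})}$ and continuity. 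This also requires $r_{fj}$ to converge, e.g.\ via Barbalat's lemma applied to $e_{f_2}$.
\end{itemize}
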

\begin{proof}
    The proof is a direct consequence of combining the results from Theorems \ref{thm:two_pursuers}, \ref{lem:stauration_behaviour} with Corollary \ref{lem:dubin_convergence_2}.
\end{proof}
\begin{note}
Here, we consider input saturation only for followers and not for leaders. But, it is possible that the control input of the leader exceeds its bounds. In such cases, we can increase the radius of curvature of leaders by adjusting the length of the trajectory suitably, using techniques for trajectory elongation presented in \cite{rao2024curvature,rao2025trajectory}. This trajectory adjustment is sufficient to keep control input under bounds.    
\end{note} 

\begin{table}[h]
\caption{Simulation's initial conditions (in SI units).}
\centering
\begin{tabular}{|c|c|c|c|c|c|c|}
\hline
 & \multicolumn{3}{c|}{\text{Case 1}} & \multicolumn{3}{c|}{\text{Case 2}} \\
\hline
Pursuer 
& $P$ & $V$ & $\gamma$ & $P$ & $V$ & $\gamma$ \\
\hline
1 & (2,1)     & 28 & -2.53 & (2,1)    & 28 & -2.53\\
\hline
2 & (34,-5) & 35 & -1.57  & (34,-5)  & 35 & -1.57\\
\hline
3 & (-10,15)  & 50 & 0.52   & (-10,15) & 50 & 0.52\\
\hline
4 & (10,-25)  & 18 & 1.05  & (10,-25) & 18 & 1.05\\
\hline
5 & (-25,-15) & 22 & -1.31  & (-25,-15)& 22 & -1.31\\
\hline
6 & NA         & NA  & NA             & (-31,33)  & 87 & 1.43\\ 
\hline
\end{tabular}
\label{tab1}
\end{table}
With these results, we have established a complete framework for achieving circular formations with sparse target information, even under unequal input saturation constraints in both directions, provided a communication path exists from every follower node to at least one leader node. The following section provides simulation results to validate these findings.
\section{Simulation results}
\label{Sec:Simulaton}
The simulations were conducted with a stationary target positioned at the origin. We analyse two primary configurations: the first scenario involves a five-pursuer system with a single leader, and the second a six-pursuer system with two leaders. Asymmetric input saturation is applied in both cases. In all the simulations result, the control input $g(e_{i_1},e_{i_2})$ for every pursuer $i$ is selected as $ C_1e_{i_1} + C_2 \sin{(e_{i_2})}$. In all graphical depictions, leader nodes are coloured blue, and follower nodes are orange.
    \begin{figure}[H]
    \centering
        \begin{subfigure}[b]{0.2\textwidth}
        \centering    
        \includegraphics[width=0.8\linewidth]{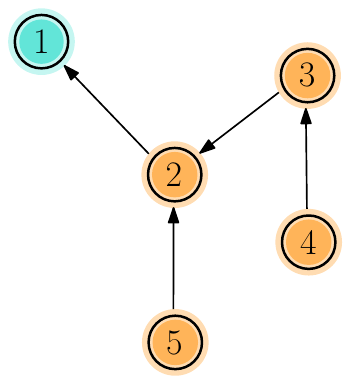}
        \caption{Case $1$}
        \label{fig:graph_one_leader}
    \end{subfigure}
    \begin{subfigure}[b]{0.2\textwidth}
        \centering
        \includegraphics[width=0.9\linewidth]{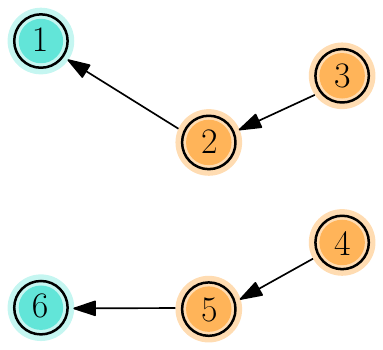}  
        \caption{Case $2$}
        \label{fig:graph_two_leader_second_part}
    \end{subfigure}
    \caption{Communication graphs}
    \label{fig:graph_two_leader_two parts}
\end{figure}
   \begin{figure}[ht]
     \centering
     \begin{subfigure}[b]{0.24\textwidth}
    \centering 
    \includegraphics[width=\linewidth]{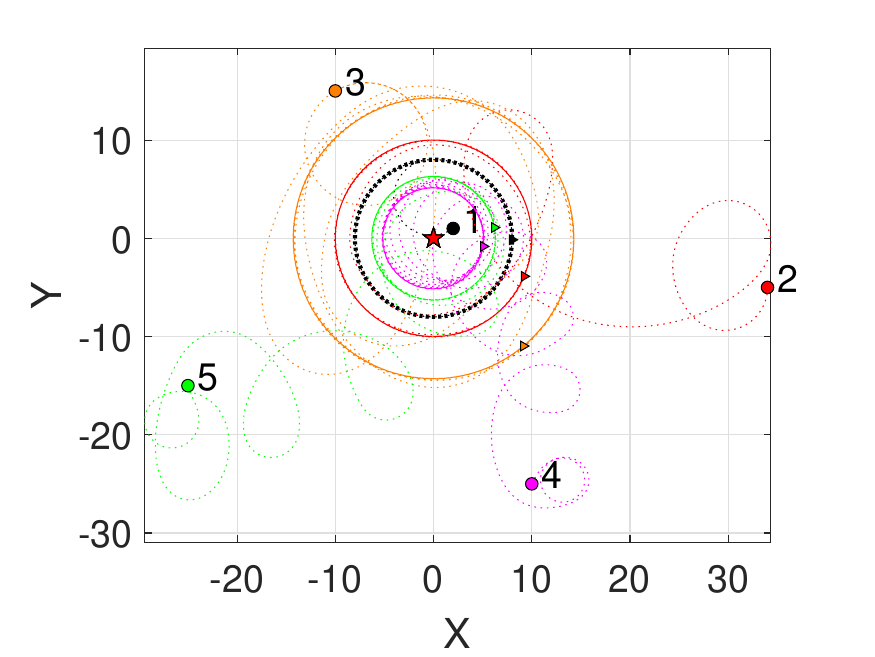}
    \caption{Case $1$}
    \label{fig:trajectory_case3}
     \end{subfigure}
     \begin{subfigure}[b]{0.24\textwidth}
    \centering 
    \includegraphics[width=\linewidth]{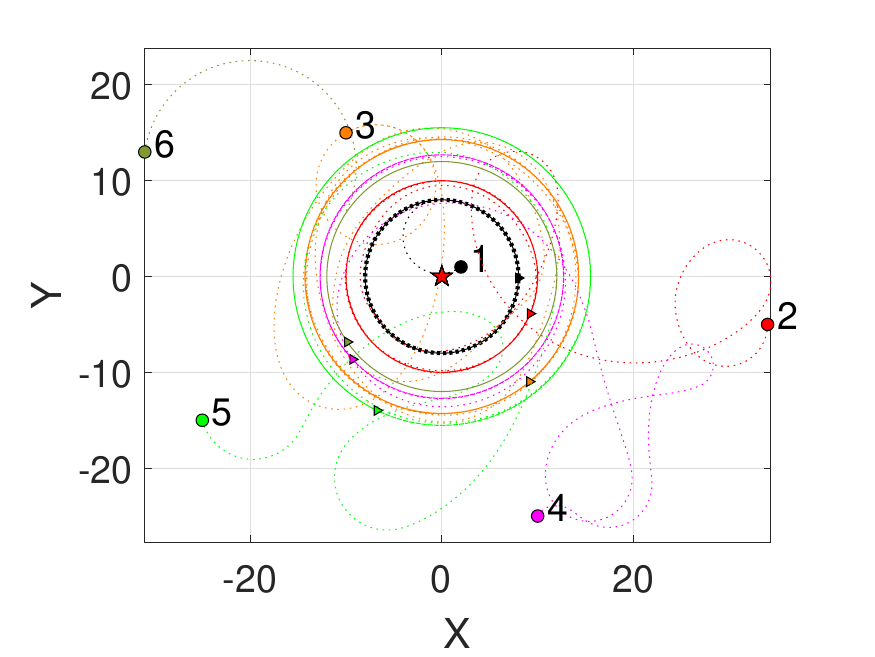}
    \caption{Case $2$}
    \label{fig:trajectory_case4}
     \end{subfigure}
     \caption{Trajectories.}
     \label{fig:trajectories}
\end{figure}
\begin{figure}[ht]
     \centering
     \begin{subfigure}[b]{0.24\textwidth}
    \centering 
    \includegraphics[width=\linewidth]{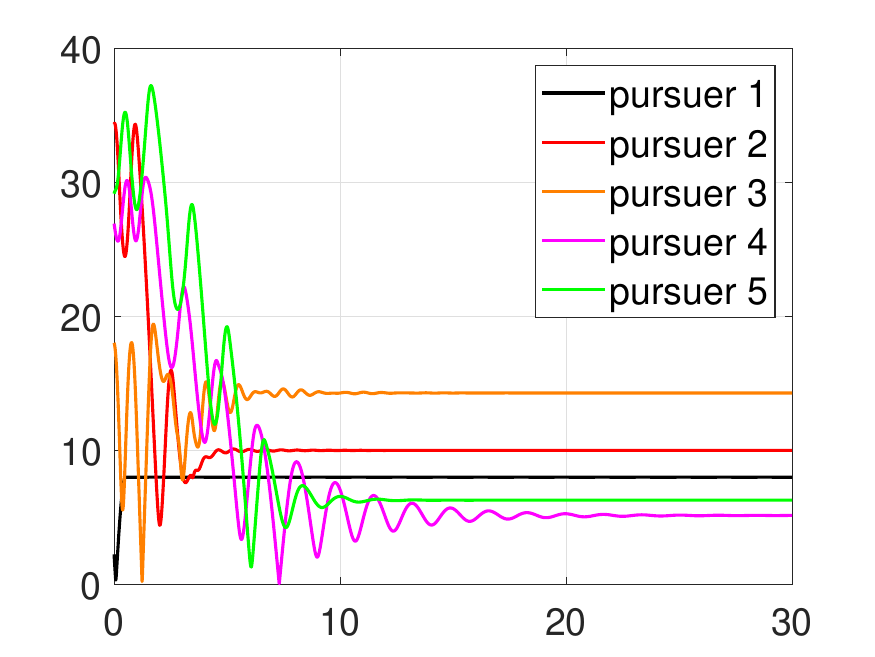}
    \caption{Case $1$}
    \label{fig:distance_from_centre_case3}
     \end{subfigure}
     \begin{subfigure}[b]{0.24\textwidth}
    \centering 
    \includegraphics[width=\linewidth]{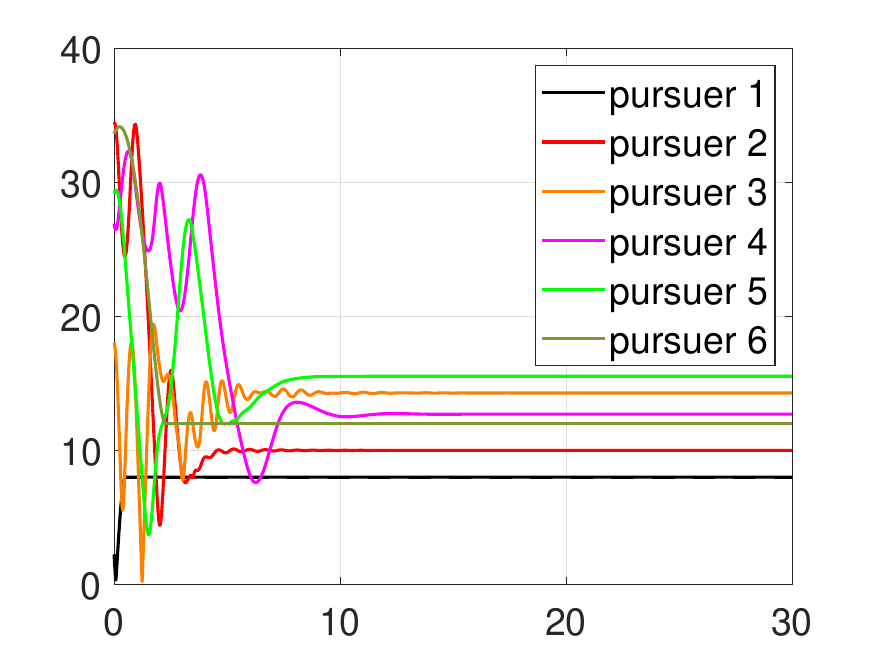}
    \caption{Case $2$}
    \label{fig:distance_from_centre_case4}
     \end{subfigure}
     \caption{Distances from the target}
     \label{fig:distances _from_target}
\end{figure}
\begin{figure}[ht]
     \centering
     \begin{subfigure}[b]{0.24\textwidth}
    \centering 
    \includegraphics[width=\linewidth]{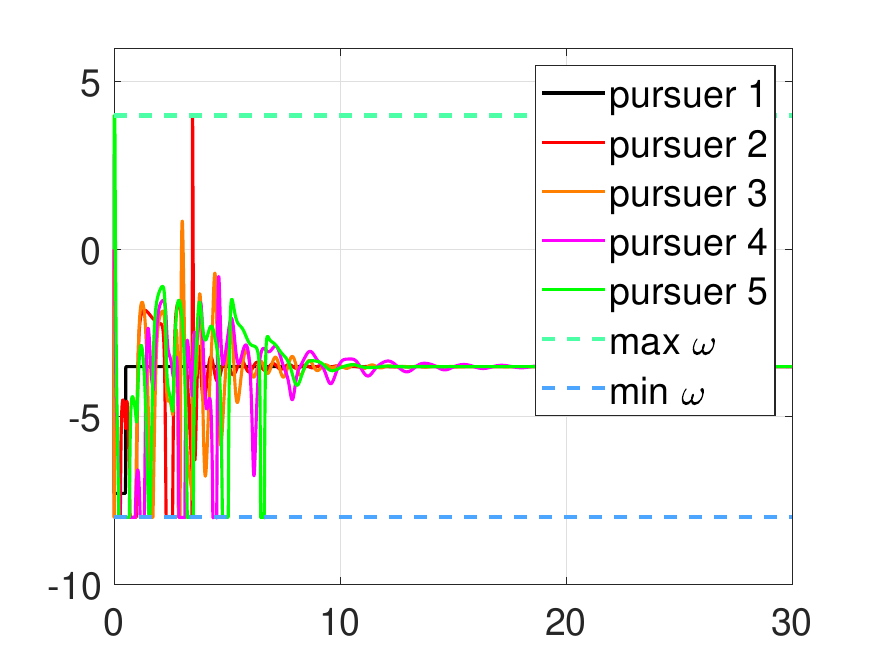}
    \caption{Case $1$}
    \label{fig:control_input_case3}
     \end{subfigure}
     \begin{subfigure}[b]{0.24\textwidth}
    \centering 
    \includegraphics[width=\linewidth]{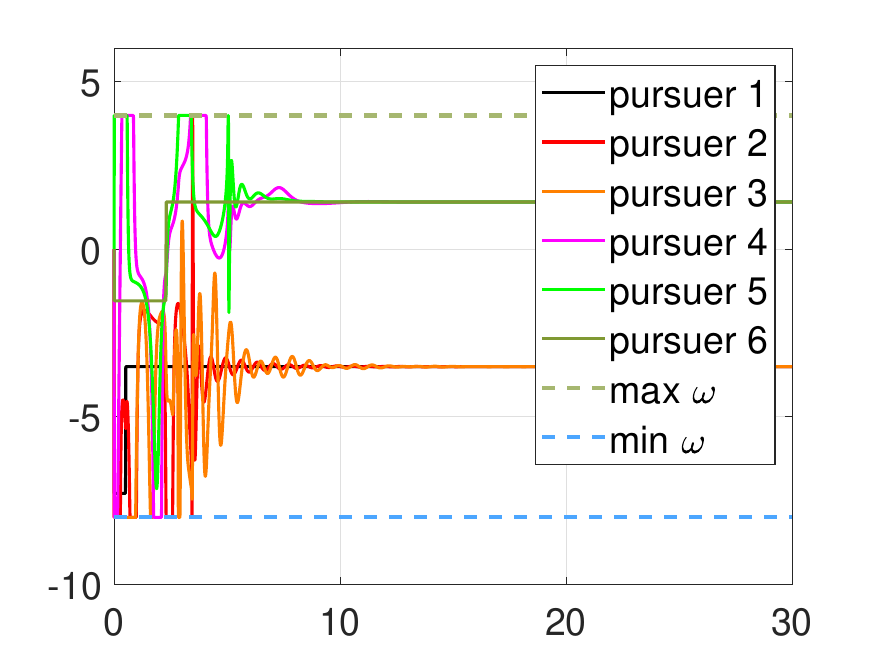}
    \caption{Case $2$}
    \label{fig:control_input_case4}
     \end{subfigure}
     \caption{Control inputs $\omega$}
     \label{fig:control_inputs}
\end{figure}
\subsection*{Case $1$: Single leader with input saturation}
In this scenario, we demonstrate the target's circumnavigation by five pursuers. The communication graph is depicted in Fig. \ref{fig:graph_one_leader}. The initial conditions are stated in Table \ref{tab1}, with control input saturation. $\omega_{max}$ is $4$ rad/s and $\omega_{min}$ is $-8$ rad/s.

As established by Lemmas \ref{lem:stauration_behaviour}, \ref{lem:dubin_convergence_2}, and Theorem \ref{thm:dubin_general_convergence}, the pursuers successfully converge to their designated circular orbits despite input constraints. The trajectories are illustrated in Fig. \ref{fig:trajectory_case3}. Convergence is confirmed by Fig. \ref{fig:distance_from_centre_case3}. The control input plot in Fig. \ref{fig:control_input_case3} shows that all inputs converge to the same value after periods of asymmetric saturation, ultimately forming a single rotating structure.
\subsection*{Case $2$: Multiple leaders with input saturation}
The communication topology for this six pursuer system is shown in Fig.~\ref{fig:graph_two_leader_two parts}, where two leaders are considered ($\alpha=2/3$ as per Proposition~\ref{propo:leaders}). The initial conditions are in Table \ref{tab1}, with asymmetric control input saturation similar to Case $1$.

Consistent with the theoretical framework, the pursuers successfully converge to their circular orbits, despite the presence of multiple leaders and input saturation. The pursuers leave saturation in finite time as predicted in Theorem \ref{lem:stauration_behaviour}. The resulting trajectories are depicted in Fig. \ref{fig:trajectory_case4}, while the stabilised distances from the centre are confirmed in Fig. \ref{fig:distance_from_centre_case4}. Fig. \ref{fig:control_input_case4} illustrates that, although the control inputs are saturated at $-8$ rad/s and $4$ rad/s, they eventually settle at two distinct angular speeds: $\omega_2$ and $\omega_3$ synchronise with $\omega_1$, $\omega_4$ and $\omega_5$ synchronise with $\omega_6$. This confirms the presence of two leaders and leads to the formation of two rotating formations.
\section{Conclusion}
\label{Sec:Conclusion}
In this paper, we present a solution to the problem of distributed circumnavigation of a stationary target using a heterogeneous group of asymmetric Dubins vehicles. The distributed system consists of leaders, who know the target's location, and followers, who do not. The guidance is designed using each pursuer's angular speed only, while their linear speeds are assumed to be constant, yet distinct. The leaders use trajectories formed by concatenating two circular arcs as established in Lemma~\ref{lem:feas_traj}. When there are multiple leaders, the trajectory construction additionally relies on Proposition~\ref{propo:leaders} in conjunction with Lemma~\ref{lem:feas_traj}.
Thereafter, a class of guidance laws are proposed for the followers in Theorem \ref{thm:two_pursuers}. The key advantage of the proposed guidance law is the requirement of only angular information of the heading angle and the LOS angles of a follower's out-neighbour. The framework is then extended to asymmetric Dubins vehicles. The pertinent results presented in Theorem \ref{lem:stauration_behaviour}, Corollary \ref{lem:dubin_convergence_2}, and Theorem \ref{thm:dubin_general_convergence} guarantee the convergence of all Dubins vehicles to their respective circular orbits even under asymmetric input saturation. The results are validated through numerical simulations. As future research, this methodology can be extended to scenarios with a moving target, obstacles, and time-varying communication topologies.

\section*{Acknowledgements}
The authors would like to express their sincere gratitude to Mr. Aditya Krishna Rao for his insightful feedback, which helped enhance the quality of this paper.

\bibliographystyle{ieeetran}        
\bibliography{autosam}


 





\end{document}